\documentclass[11pt]{amsart}
\usepackage[utf8]{inputenc}
\usepackage{verbatim}
\usepackage{amssymb,amsmath,euscript}
\usepackage{xcolor}

\newtheorem{theorem}{Theorem}[section]
\newtheorem{lemma}[theorem]{Lemma}

\newtheorem{conjecture}[theorem]{Conjecture}

\theoremstyle{definition}
\newtheorem{definition}[theorem]{Definition}

\theoremstyle{remark}
\newtheorem{remark}[theorem]{Remark}

\newcommand{\Fq}{\mathbb{F}_q}
\newcommand{\Glm}{G_{\ell, m}}

\newcommand{\blV}{\bigwedge^\ell V}
\newcommand{\bmlV}{\bigwedge^{m-\ell} V}

\DeclareMathOperator{\wt}{wt}
\DeclareMathOperator{\codim}{codim}
\DeclareMathOperator{\Ev}{Ev}
\DeclareMathOperator{\rank}{rank}
\newcommand{\Oalm}{\Omega_\alpha(\ell,m)}
\newcommand{\Otlm}{\Omega_\theta(\ell,m)}
\newcommand{\Calm}{C_\alpha(\ell,m)}
\newcommand{\Ctlm}{C_\theta(\ell,m)}

\begin{document}	

\title[]{Minimum Schubert Codewords and Second-Minimum Grassmann Codewords}
\author{Muskan Khaneja}
\address{Department of Mathematics,\newline \indent
	IIT Jammu,\newline \indent
	Jammu \& Kashmir, 181221.}
\email{muskanhcps@gmail.com}
\author{Prasant Singh}
\address{Department of Mathematics,\newline \indent
	IIT Jammu,\newline \indent
	Jammu \& Kashmir, 181221.}
\email{psinghprasant@gmail.com}
\date{\today}

\begin{abstract}
In this paper, we give a classification of the minimum weight codewords of Schubert codes $\Calm$ by settling the conjecture \cite[Conj. 5.6]{ghorpade_singh} for all values of $q$ and all $\alpha$. We use this classification to prove that a codeword of the Grassmann code $C(\ell, m)$ has the second minimum weight if and only if it is indexed by an element of $\bigwedge^{m-\ell}V$ that can be written as the product of a decomposable $(m-\ell-2)$-vector and an alternating $2$-vector of rank $4$. Finally, we give an enumeration of the second minimum weight codewords of the Grassmann code.

\end{abstract}

\maketitle


\section{Introduction}
Let $q$ be a prime power and let $\Fq$ denote the finite field with $q$ elements. Let $\ell, m$ be positive integers satisfying $\ell\leq m$ and let $V$ be an $m$-dimensional vector space over the field $\Fq.$ Let $G(\ell, V)$ denote the set of all $\ell$-dimensional subspaces of $V$. The Pl\"ucker map embeds $G(\ell, V)$ into the projective space $\mathbb{P}(\bigwedge^\ell V)$ nondegenerately. It can be shown that, geometrically, this embedding depends only on the positive integers $\ell$ and $m$. Therefore, we denote this set by $G(\ell, V)$ as well as by $G_{\ell, m}$ and call it the Grassmannian of $\ell$-planes of $V$. As a subset of the projective space $\mathbb{P}(\bigwedge^\ell V)$, the Grassmannian can be viewed as a projective system \cite[Ch. 1]{TVN1}. The linear code obtained from the projective system $G_{\ell, m}$ is known as the Grassmann code and is denoted by $C(\ell, m)$. Grassmann codes and their properties have been an active area of research since they were first introduced in a series of papers by Ryan and Ryan over the binary field \cite{C_Ryan1, C_Ryan2, cryan_kryan} and later by Nogin \cite{nogin_minimum_distance} over a general finite field. For example, Ryan and Ryan in the binary case and Nogin in general proved that the minimum distance of the Grassmann code $C(\ell, m)$ is $q^{\ell(m-\ell)}$ and that the minimum weight codewords of $C(\ell, m)$ are indexed by completely decomposable elements of $\bigwedge^{m-\ell}V$. Further, Nogin \cite[Thm. 3]{nogin_minimum_distance} determined the weight spectrum of the Grassmann code $C(2, m)$. He also determined \cite[Thm. 4]{nogin_minimum_distance} the first $\mu= \max\{\ell+1, m-\ell+1\}$ higher weights of the Grassmann code $C(\ell, m)$ and showed that they are given by the Griesmer-Wei bounds. Continuing this work, Nogin \cite{nogin_g36} determined the weight spectrum of $C(3, 6)$, while Kaipa and Pillai \cite{kaipa_g3_7} determined the weight spectrum of $C(3, 7)$. Ghorpade and Kaipa determined the full automorphism group of $C(\ell, m)$. Further, Ghorpade, Patil and Pillai determined the $(\mu+1)^{\it th}$ higher weight of the Grassmann code $C(2, m)$. It is worth pointing out that, in the computation of $d_{\mu+1}$, they used not only the second minimum weight of the code $C(2, m)$ but also the structure of the codewords having the second minimum weight. This illustrates that the Grassmann codes have been studied extensively by different groups of mathematicians over the last three decades.

Schubert varieties are special subvarieties of the Grassmannian $G_{\ell, m}$ obtained by intersecting $G_{\ell, m}$ with certain systematic coordinate hyperplanes of $\mathbb{P}(\bigwedge^\ell V)$. To be precise, for any $\ell$-tuple $\alpha=(\alpha_1,\alpha_2,\dots, \alpha_\ell)$ of a strictly monotonic increasing sequence of positive integers between $1$ and $m$, let $\Oalm$ be the subvariety of the Grassmannian obtained by intersecting $G_{\ell, m}$ with the Pl\"ucker coordinate hyperplanes $P_\beta=0$ where $\beta\nleq\alpha$ under the Bruhat order. The restriction of the Pl\"ucker map to $\Oalm$ embeds the Schubert variety $\Oalm$ into the projective space $\mathbb{P}(\bigwedge^\ell V)$, but this embedding is not non-degenerate in general. Nevertheless, one could consider the smallest projective subspace of $\mathbb{P}(\bigwedge^\ell V)$ where this variety lies and consider the corresponding projective system to obtain the linear code. The linear code obtained from $\Oalm$ in this way is known as the Schubert code and is denoted by $\Calm$. Schubert codes were introduced by Ghorpade and Lachaud in \cite{ghorpade_and_lachaud}. They proved that the minimum distance of $C_{\alpha}(\ell,m)$ satisfies

$$
d(C_{\alpha}(\ell,m)) \leq q^{\delta(\alpha)},
\quad \text{where} \quad 
\delta(\alpha):=\sum_{i=1}^{\ell}(\alpha_{i}-i).
$$

They also conjectured in \cite{ghorpade_and_lachaud} that the inequality above is, in fact, an equality. Ghorpade and Singh \cite{ghorpade_singh} referred to this as the Minimum Distance Conjecture (MDC). For $\ell=2$, the MDC was first proved by Chen \cite{Chen}, and independently by Guerra and Vincenti \cite{Guerra_Vincenti}. The conjecture was later established in full generality by Xiang \cite{xiang}. Ghorpade and Singh \cite{ghorpade_singh} also gave an alternative, coordinate-free proof of the MDC. In addition, they proposed a characterization of the minimum weight codewords of Schubert codes in terms of Schubert decomposable elements of $\bigwedge^{m-\ell}V$ \cite[Conj. 5.6]{ghorpade_singh} and proved several aspects of this conjecture. For example, they proved that the codewords of $\Calm$ indexed by Schubert decomposable elements of $\bigwedge^{m-\ell}V$ are minimum weight codewords, and that if a minimum weight codeword is indexed by some decomposable element of $\bigwedge^{m-\ell}V$, then that element must be Schubert decomposable. Furthermore, they showed that every minimum weight codeword of $C_{\alpha}(\ell,m)$ can be indexed by a decomposable element of $\bigwedge^{m-\ell}V$ whenever $\ell=2$ or $\alpha$ is completely non-consecutive. But the conjecture is still open in general. However, recently, Datta and Dutta \cite{datta2026characterizationminimumweightcodewords}, using combinatorial decomposition, proved that every minimum weight codeword can be indexed by a decomposable element of $\bigwedge^{m-\ell}V$ for all but finitely many values of $q$. Specifically, they established this result for

$$
q > q_0(\ell)= \dfrac{2^{\dfrac{1}{\ell-1}}}{2^{\dfrac{1}{\ell-1}}-1}.
$$

It follows from \cite{datta2026characterizationminimumweightcodewords} that $q_0(\ell)$ is an increasing function of $\ell$; nevertheless, the conjecture remains unsolved for infinitely many pairs $(q,\ell)$.

In this paper, we extend the coordinate-free methods developed by Ghorpade and Singh \cite{ghorpade_singh} and prove that every minimum weight codeword of $C_{\alpha}(\ell,m)$ can be indexed by a decomposable element of $\bigwedge^{m-\ell}V$ for arbitrary $\ell$ and $q$. In other words, we settle the conjecture \cite[Conj. 5.6]{ghorpade_singh}. Combined with the results in \cite{ghorpade_singh}, we now have a complete classification of the minimum weight codewords of Schubert codes in terms of Schubert decomposable elements.

The problem of computing the weight spectrum of Grassmann codes and the generalized Hamming weights of Grassmann codes $C(\ell, m)$ is considerably difficult. Recently, Datta and Dutta \cite{datta2026secondminimumweightgrassmann} took a step in the direction of computing the weight spectrum of $C(\ell, m)$ and determined the second minimum weight of $C(\ell,m)$ using a new combinatorial decomposition of the Grassmannian. As mentioned earlier, Ghorpade, Patil and Pillai \cite{ghorpade_decomposable} used the second minimum weight of $C(2, m)$ and the structure of such codewords in computing the higher weight $d_{\mu+1}$. Thus, it is natural to ask for a classification of the second minimum weight codewords of $C(\ell, m)$, as this is an interesting problem from both a coding-theoretic and a geometric point of view. Such a classification may also be useful in computing the higher weights of Grassmann codes.

In their article, Datta and Dutta \cite{datta2026secondminimumweightgrassmann} also observed that a complete characterization of the second minimum-weight codewords of $C(\ell, m)$ requires a classification of the minimum-weight codewords of Schubert codes. Using the classification of the minimum weight codewords of $\Calm$ established in this paper, we prove that the elements $f \in \bigwedge^{m-\ell} V$ indexing the second minimum weight codewords of $C(\ell,m)$ are precisely of the form

$$
f=v_1 \wedge \cdots \wedge v_{m-\ell-2} \wedge \xi
$$

where $v_1, \ldots, v_{m-\ell-2} \in V$ are linearly independent and $\xi \in \bigwedge ^2 V$ is an alternating 2-vector of rank 4.

The paper is organized as follows. In Section 2, we recall the necessary background on Grassmannians, Schubert varieties, and Schubert codes. In Section 3, we classify the minimum weight codewords of Schubert codes. In Section 4, we use this classification to characterize the second minimum weight codewords of Grassmann codes and determine the number of such codewords.


\section{Preliminaries} \label{sec2}
In this section, we provide a brief introduction to the construction of Grassmann and Schubert codes and collect some results for these codes that we will use in this article. Grassmann codes and Schubert codes are algebraic geometric codes obtained from the projective systems of the sets of $\Fq$-rational points of Grassmannian and of Schubert varieties in the Grassmannian \cite[Ch. 11.5]{TVN2}. To be precise, let $\ell, m$ be positive integers satisfying $\ell\leq m$, and as before let $V$ be an $m$-dimensional vector space over $\Fq$. The set of $\Fq$-rational points of the Grassmann variety is defined and denoted by 
$$
G(\ell, V)=\{ L\subseteq V: L\text{ is a linear subspace and}\dim L=\ell\}.
$$
As in this article, we are going to deal with the set of $\Fq$-rational points of the Grassmann variety, and by abuse of language, we will call this set $G(\ell, V)$ the Grassmannian or Grassmann variety. If there is no confusion about the ambient vector space $V$, we use the notation $\Glm$ to denote the Grassmannian $G(\ell, V)$. It is well known that the Grassmannian $\Glm$ is embedded nondegenerately into the projective space $\mathbb{P}(\bigwedge^\ell V)$ via the Pl\"ucker map. We may consider $\Glm\subseteq \mathbb{P}(\bigwedge^\ell V)$ via the Pl\"ucker map. Further, fixing a basis $\{e_1, \dots, e_m\}$ of $V$ and by identifying $e_1\wedge\dots\wedge e_m=1$, we may consider $\bigwedge^{m-\ell} V$ as the dual of the  vector space $\bigwedge^\ell V$. Now, let $\{L_1,\dots, L_n\}$ be an enumeration of the points of $\Glm$ and let $P_1, \dots, P_n$ be their representatives in the space $\blV$ . Consider the map 
 \begin{align}
 \label{eq:defGC}
   \Ev: \bmlV &\to \Fq^{\,n}   \nonumber\\
   f&\mapsto c_f=(f(P_1),\dots, f(P_n)).
 \end{align}
 The evaluation map is linear and injective. The image of this map is called Grassmann code, and we denote it by $C(\ell, m)$. It has been shown by Ryan and Ryan \cite{C_Ryan1, C_Ryan2, cryan_kryan} over binary field and by Nogin \cite{nogin_minimum_distance} over a general finite field that $C(\ell, m)$ is an $[n, k, d]$-code, where 
 \begin{equation}
     \label{eq:parametersGC}
     n= \left[\begin{matrix} m \\ \ell \end{matrix}\right]_q,\;    k=\binom{m}{\ell},\;  \text{ and } d=q^\delta,
 \end{equation}
where $\left[\begin{matrix} m \\ \ell \end{matrix}\right]_q$ denotes the Guassian binomial and $\delta=\ell(m-\ell)$. Further, they also showed that a nonzero codeword of $C(\ell, m)$ is of minimum weight if and only if it is obtained by evaluating a decomposable element of $\blV.$

Let $I(\ell,m)$ be the set of all strictly increasing $\ell$-tuples of positive integers $\{1, 2,\dots, m\}$, i.e., 
$$
I(\ell,m)=\{ \alpha=(\alpha_1, \ldots, \alpha_\ell)\in \mathbb{Z}^\ell : 1 \leq \alpha_1 < \cdots < \alpha_\ell \leq m\}.
$$
We consider the set $I(\ell, m)$ under the partial order defined as: for $\alpha, \beta\in I(\ell, m)$ 
$$
\alpha\leq \beta \iff \alpha_i\leq \beta_i\text{ for }1\leq i\leq \ell. 
$$
The partial order defined above is well known as the \emph{Bruhat order}. Further, for $\alpha\in I(\ell, m)$ we set $\delta(\alpha)= \sum_{i=1}^\ell (\alpha_i-i).$

For a given $\alpha=(\alpha_1, \ldots, \alpha_\ell)\in I(\ell,m)$, let $A_1 \subset \cdots \subset A_\ell$  be a partial flag of subspaces of $V$ satisfying $\dim A_i=\alpha_i$ for $1\leq i\leq \ell$. The set of $\Fq$-rational points of the  Schubert variety corresponding to the dimension sequence $\alpha$ is defined and denoted by 
$$
\Omega_\alpha(\ell,m)=
\left\{
L \in G_{\ell,m}:
\dim (L \cap A_i) \geq i,\;
1 \leq i \leq \ell
\right\}.
$$
Again, by abuse of notation, we call the set $\Oalm$ the Schubert variety in the Grassmannian $\Glm$ corresponding to the dimension sequence $\alpha$. Note that if we take $\alpha$ to be consecutive, i.e.,  $\alpha=(m-\ell+1, m-\ell+2,\dots, m)$, then $\Oalm=\Glm$ and $\delta(\alpha)= \ell(m-\ell)$. The restriction of the Pl\"ucker map to Schubert varieties $\Oalm$ gives an embedding of $\Oalm$ into the projective space $\mathbb{P}(\bigwedge^\ell V)$ but this embedding is degenerate except when $\Oalm$ is the full Grassmannian $\Glm$. By taking the smallest projective subspace of $\mathbb{P}(\bigwedge^\ell V)$ containing $\Oalm$, we may consider this as a projective system and the Schubert code is the linear code obtained from this projective system \cite[Ch.  1]{TVN1}. In other words, one can consider the evaluation map defined in equation \eqref{eq:defGC} and restrict the map by evaluating the functions $f\in\bmlV$ on points that are in Schubert variety $\Oalm$. Since the embedding of $\Oalm$ into the projective space $\mathbb{P}(\bigwedge^\ell V)$ is degenerate in general, the evaluation map obtained by restricting \eqref{eq:defGC} is not injective. Nevertheless, the image is still a (degenerate) subspace of $\Fq^{\,n}$. The image of the restriction map described above is called the Schubert code and is denoted by $\Calm$. One could also realize the Schubert code as the code obtained from puncturing the Grassmann code $C(\ell, m)$ on the points $\Glm\setminus \Oalm.$ Further, it is worth mentioning that, \emph{since the restriction of the evaluation map is not injective, a codeword $c\in \Calm$ can be indexed by several $f\in \bmlV$}. It has now been established that the Schubert code $\Calm$ is an $[n_\alpha, k_\alpha, d_\alpha]$ linear code with 

\begin{equation}
    \label{eq:parametersSC}
    n_\alpha = \sum_{\beta\leq \alpha}q^{\delta(\beta)}, \quad 
k_\alpha= \det_{1 \leq i,j \leq \ell}\left( \left( 
\begin{matrix}
\alpha_j-j+1 \\ i-j+1 
\end{matrix} \right) \right), \quad{ and }\quad d_\alpha=q^{\delta(\alpha)}.
\end{equation}

Following the notations from \cite{ghorpade_singh}, it is natural to partition the sequence $\alpha$ into maximal blocks of consecutive integers as:
$$
\alpha=(\alpha_1,\ldots,\alpha_{p_1},
\alpha_{p_1+1},\ldots,\alpha_{p_2},
\ldots,
\alpha_{p_u+1},\ldots,\alpha_\ell),
$$
where the entries within each block are consecutive integers and
$$
\alpha_{p_j+1}-\alpha_{p_j} \geq 2,\qquad 1\le j\le u.
$$

For convention, define $p_0=0$ and $p_{u+1}=\ell$. Note that the integers $u, p_1, \ldots, p_u$ are uniquely determined by $\alpha$. For example, if $\ell=6$ and $\alpha=(1,2,4,5,6,8)$, then $u=2$ and $(p_1,p_2)=(2,5)$. The Schubert variety $\Oalm$ can now be redefined as 
$$
\Oalm=\{L\in G(\ell, V): \dim(L\cap A_{p_i})\geq i \text{ for }i= 1, 2, \dots, u\}. 
$$

Recall that  for any $f\in\blV$, the set of annihilators of $f$ is denoted by $V_f$ and it is defined as
$$
V_f=\{v\in V: f\wedge x= 0\}.
$$
It is well known that $V_f$ is a subspace of $V$ and if $f$ is a nonzero element of $\blV$ then $\dim V_f\leq \ell$ with equality if and only if $f$ is a completely decomposable. The last ingredient to state the minimum weight classification conjecture \cite[Conj. 5.6]{ghorpade_singh} for Schubert codes $\Calm$, which is the first aim of this article, is the notion of Schubert decomposability that we are borrowing from \cite[Def. 2.1]{ghorpade_singh} which is:

\begin{definition}
An element $f \in \bmlV$ is said to be \emph{Schubert decomposable} (with respect to the Schubert variety $\Oalm$) if $f$ is completely decomposable, i.e., $f \neq 0$ and $f=f_1 \wedge \dots \wedge f_{m-\ell}$ for some $f_1, \dots, f_{m-\ell} \in V$, and moreover,
\begin{equation}
    \dim(V_f \cap A_{p_i}) =\alpha_{p_i}-p_i \text{ for all } i=1, \dots, u.
\end{equation}
\end{definition}
Note that in the case when $\alpha$ is a consecutive sequence, the notion of Schubert decomposability is same as the notion of complete decomposability of an exterior product.  In order to give a classification of the minimum weight codewords of Schubert codes $\Calm$, Ghorpade and Singh \cite[Conj. 5.6]{ghorpade_singh} proposed the following conjecture:
\begin{conjecture}
  \label{conj:1} 
  Minimum weight codewords of the Schubert code $C_{\alpha}(\ell,m)$ are precisely the codewords corresponding to Schubert decomposable elements of $\bigwedge^{m-\ell}V$.
\end{conjecture}

Our aim is to settle this conjecture in full generality. The proof is motivated by the ideas used in \cite{ghorpade_singh}, and we use several results from that article in settling the conjecture. For the sake of completeness, we would like to list down all the results and proofs that we will be using in the proof of the main theorem. Following \cite{ghorpade_singh}, for $\ell>1$, we  denote by $\alpha'=(\alpha_1,\ldots,\alpha_{\ell-1})$, the dimension sequence of the truncated partial flag $A_1 \subset \cdots \subset A_{\ell-1}$. For $f \in \bigwedge^{m-\ell}V$, define
$$
E:=\{x \in A_{\ell}: c_{f \wedge x} \text{ is the zero codeword in } C_{\alpha'}(\ell-1,m)\} \quad \text{and} \quad F:= A_\ell \setminus E.
$$
The following lemma is due to Xiang \cite[Lemma 3]{xiang} and it will be used several times throughout this article. 
	 
\begin {lemma} \label{a_lminust_in_e}
    Assume that $\ell > 1$ and $f\in\bigwedge^{m-\ell }V$ is given. Let $E$ be as above and let $t$ be a nonnegative integer such that $\codim_{A_\ell}E \le t$. Then $A_{\ell-t}\subseteq E$.
\end{lemma}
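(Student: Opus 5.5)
The plan is a direct argument, with no induction on $t$. The idea is to rewrite any evaluation $f(x\wedge M)$ with $x\in A_{\ell-t}$ as an evaluation $f(y\wedge N)$ with $y\in E$.

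First, $E$ is a linear subspace of $A_\ell$. The map $x\mapsto c_{f\wedge x}$ is linear, because $x\mapsto f\wedge x$ is linear and the evaluation map is linear. So $E$ is the kernel of a linear map restricted to $A_\ell$, and $\codim_{A_\ell}E$ makes sense.

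Next, unwind the definition. Under the duality fixed in Section \ref{sec2}, for $M=\langle m_1,\dots,m_{\ell-1}\rangle\in\Omega_{\alpha'}(\ell-1,m)$ the coordinate of $c_{f\wedge x}$ at $M$ is, up to sign, the scalar $f\wedge x\wedge m_1\wedge\dots\wedge m_{\ell-1}$. Hence $x\in E$ if and only if $f\wedge x\wedge m_1\wedge\cdots\wedge m_{\ell-1}=0$ for every $M\in\Omega_{\alpha'}(\ell-1,m)$.

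Now fix $x\in A_{\ell-t}$ and $M\in\Omega_{\alpha'}(\ell-1,m)$. Choose a basis $m_1,\dots,m_{\ell-1}$ of $M$ with $m_i\in A_i$, which is possible since $\dim(M\cap A_i)\ge i$.
\begin{itemize}
\item If $x\in M$, the wedge vanishes and there is nothing to show.
\item Otherwise $L:=M+\langle x\rangle$ is an $\ell$-dimensional subspace of $A_\ell$, and $x\wedge m_1\wedge\cdots\wedge m_{\ell-1}$ is a Plücker representative of $L$.
\end{itemize}
In the second case, set $P:=\langle m_1,\dots,m_{\ell-t-1}\rangle$, which has dimension $\ell-t-1$. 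Since $\codim_{A_\ell}E\le t$, we get $\dim(L\cap E)\ge \ell-t>\dim P$. So there is $y\in (L\cap E)\setminus P$.

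Because $y\notin P$, we can extend $P$ to a hyperplane $N$ of $L$ with $y\notin N$. Then $L=N\oplus\langle y\rangle$, and $y\wedge n_1\wedge\cdots\wedge n_{\ell-1}$ is a nonzero scalar multiple of $x\wedge m_1\wedge\cdots\wedge m_{\ell-1}$ for any basis $(n_i)$ of $N$, since both represent the same point $L$.

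The key step is to check that $N\in\Omega_{\alpha'}(\ell-1,m)$, i.e.\ $\dim(N\cap A_i)\ge i$ for $1\le i\le \ell-1$. We split into two ranges of $i$.
\begin{itemize}
\item For $i\le \ell-t-1$: we have $N\cap A_i\supseteq\langle m_1,\dots,m_i\rangle$, which has dimension $i$.
\item For $\ell-t\le i\le\ell-1$: we have $L\cap A_i\supseteq\langle x,m_1,\dots,m_i\rangle$, which has dimension $i+1$ because $x\in A_{\ell-t}\subseteq A_i$ and $x\notin M$. Intersecting with the hyperplane $N$ of $L$ loses at most one dimension, so $\dim(N\cap A_i)\ge i$.
\end{itemize}
Since $y\in E$ and $N\in\Omega_{\alpha'}(\ell-1,m)$, we get $f\wedge y\wedge n_1\wedge\cdots\wedge n_{\ell-1}=0$. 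Hence $f\wedge x\wedge m_1\wedge\cdots\wedge m_{\ell-1}=0$. As $M$ was arbitrary, $c_{f\wedge x}=0$, so $x\in E$, which proves $A_{\ell-t}\subseteq E$.

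The main obstacle is this choice of $y$ and $N$. The dimension count $\dim(L\cap E)\ge\ell-t$ must beat $\dim P=\ell-t-1$, so that $N$ can retain the small-index flag conditions through $P$. The large-index conditions then come for free from the extra vector $x$. The degenerate cases ($x\in M$, or $t\ge \ell$, where $P=0$ and $A_{\ell-t}$ is read as $0$) should be checked separately but are immediate.
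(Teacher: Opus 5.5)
Your proof is correct. Each step checks out:
\begin{itemize}
\item $L=M+\langle x\rangle$ lies in $A_\ell$, because your adapted basis puts $M$ inside $A_{\ell-1}$. Hence $\dim(L\cap E)\ge \ell+(\alpha_\ell-t)-\alpha_\ell=\ell-t>\dim P$.
\item Any hyperplane $N\supseteq P$ of $L$ with $y\notin N$ satisfies the Schubert conditions. For $i\le \ell-t-1$ they come through $P$. For $i\ge \ell-t$ they come from $\dim(L\cap A_i)\ge i+1$, which is the only place the stronger hypothesis $x\in A_{\ell-t}$ (rather than just $x\in A_\ell$) is used.
\item Since $y\wedge n_1\wedge\cdots\wedge n_{\ell-1}$ and $x\wedge m_1\wedge\cdots\wedge m_{\ell-1}$ represent the same point $L$, the vanishing of $(f\wedge y)(N)$ forces $(f\wedge x)(M)=0$.
\end{itemize}
The degenerate cases $x\in M$, $t=0$ (where $N=P=M$) and $t\ge \ell$ are indeed immediate.

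There is no in-paper argument to compare with. The paper quotes this statement from \cite[Lemma 3]{xiang} without proof and uses it as a black box, in Case~1 of Theorem~\ref{maintheorem} and in Lemma~\ref{c}. Your exchange argument therefore gives a short, self-contained, coordinate-free proof of that cited fact, handling every $t$ at once. It is in the same spirit as the coordinate-free methods the paper adopts from Ghorpade--Singh.
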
	

The next lemma is also due to Xiang \cite[Corollary 1]{xiang} and an alternative coordinate-free proof was given by Ghorpade and Singh in \cite[Lemma 3.3]{ghorpade_singh}.

\begin{lemma} \label{counting_fibres}
    Assume that $\ell > 1$ and $f\in\bigwedge^{m-\ell }V$ is given. Let $\alpha'$ and $E$ be as defined above. Also, let 
    $$
    Z(\alpha',f)=\{(L',x)\in \Omega_{\alpha'}(\ell-1,m) \times A_\ell : (f\wedge x)(L')\neq 0\}
    $$
    and let $\phi:Z(\alpha',f)\longrightarrow W(f)$ be the map given by $(L',x)\mapsto \langle L',x\rangle$. Then
    $\phi$ is well-defined and surjective. Moreover, given any $L \in W(f)$, the following holds.
    \begin{enumerate}
    \item[(i)]	If $L\not\subseteq A_{\ell-1}$ then $|\phi^{-1}(L)|=q^{\ell-1}(q-1)$.
    \item[(ii)]	If $L\subseteq A_{\ell-1}$ and if $\, t:= \codim_{A_\ell}E $, then $|\phi^{-1}(L)|\leq q^{\ell-1} (q^t-1)$.
    \end{enumerate}	
\end{lemma}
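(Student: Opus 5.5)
The plan is to verify the statement directly from the definitions, the main input being multilinearity of the wedge product. I take $W(f)$ to be the support of $c_f$ in $C_\alpha(\ell,m)$, i.e.\ the set of $L\in\Omega_\alpha(\ell,m)$ with $f(L)\neq 0$. The key identity is that for $x\in V$ and an $(\ell-1)$-space $L'$, $(f\wedge x)(L')=\pm f(L'\wedge x)$. Here $L'\wedge x$ is either $0$ (when $x\in L'$) or a nonzero multiple of the Plücker representative of $\langle L',x\rangle$.

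\emph{Well-definedness.} Let $(L',x)\in Z(\alpha',f)$. Since $L'\in\Omega_{\alpha'}(\ell-1,m)$ we have $\dim(L'\cap A_{\ell-1})\ge \ell-1$, so $L'\subseteq A_{\ell-1}\subseteq A_\ell$. The condition $(f\wedge x)(L')\ne0$ forces $x\notin L'$. Hence $L=\langle L',x\rangle$ has dimension $\ell$ and lies in $A_\ell$. For $i\le\ell-1$ we have $\dim(L\cap A_i)\ge\dim(L'\cap A_i)\ge i$, so $L\in\Omega_\alpha(\ell,m)$. By the key identity $f(L)\neq0$, so $L\in W(f)$.

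\emph{Surjectivity.} Let $L\in W(f)$. The space $L\cap A_{\ell-1}$ has dimension at least $\ell-1$, and inside it I choose an $(\ell-1)$-dimensional $L'$ adapted to the flag, i.e.\ with $\dim(L'\cap A_i)\ge i$ for all $i\le \ell-1$. To build it, pick successively $v_i\in L\cap A_i$ independent of $v_1,\dots,v_{i-1}$, which is possible since $\dim(L\cap A_i)\ge i$. Then take any $x\in L\setminus L'$; the key identity gives $(f\wedge x)(L')\neq 0$.

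\emph{Fibres.} The fibre $\phi^{-1}(L)$ consists of the pairs $(L',x)$ with $L'\subset L$ a hyperplane in $\Omega_{\alpha'}(\ell-1,m)$ and $x\in L\setminus L'$. The nonvanishing condition is automatic, because $L'\wedge x$ is a nonzero multiple of $L$ and $f(L)\ne0$. So every admissible $L'$ contributes exactly $q^\ell-q^{\ell-1}=q^{\ell-1}(q-1)$ pairs, and it suffices to count the admissible hyperplanes $L'$.

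In case (i), $L\not\subseteq A_{\ell-1}$, so $L\cap A_{\ell-1}$ has dimension exactly $\ell-1$. Since any admissible $L'$ lies in $A_{\ell-1}$, it must equal $L\cap A_{\ell-1}$. This $L'$ is admissible, because for $i\le\ell-1$ we have $A_i\subseteq A_{\ell-1}$ and hence $L'\cap A_i=L\cap A_i$. This gives exactly one $L'$ and the count $q^{\ell-1}(q-1)$.

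Case (ii) is the step needing an idea. Here I use $E$, which is a subspace of $A_\ell$: it is the kernel of the linear map $x\mapsto c_{f\wedge x}$. For admissible $(L',x)$ we have $c_{f\wedge x}\ne 0$, so every $x\in L\setminus L'$ lies in $F$. Hence $E\cap L\subseteq L'$. Since $L\subseteq A_\ell$ and $\operatorname{codim}_{A_\ell}E=t$, we get $\dim(E\cap L)\ge \ell-t$. The number of hyperplanes of $L$ containing a fixed subspace of codimension at most $t$ is at most $(q^t-1)/(q-1)$. Multiplying by $q^{\ell-1}(q-1)$ gives $|\phi^{-1}(L)|\le q^{\ell-1}(q^t-1)$. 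If $t=0$ the fibre is empty, consistent with the bound.
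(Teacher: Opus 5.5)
Your proof is correct and complete. Well-definedness, surjectivity via a flag-adapted choice of $L'$, the exact count in (i) via $L'=L\cap A_{\ell-1}$, and the bound in (ii) via $L\cap E\subseteq L'$ together with the count of hyperplanes of $L$ through a subspace of codimension at most $t$ are all sound. The paper does not reprove this lemma; it cites Xiang and Ghorpade--Singh, and your argument is essentially the standard coordinate-free fibre count used there.
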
	
 
The following lemma is proved by Ghorpade and Singh \cite[Lemma 3.4]{ghorpade_singh}. Since the proof of the classification conjecture uses some part of this proof, for the sake of completeness, we borrow the proof as well from \cite{ghorpade_singh}.

\begin{lemma} \label{weight_inequality}
    Assume that $\ell > 1$. Let $f\in\bigwedge^{m-\ell }V$ be such that $c_f\ne 0$ and let $E$ and $F$ be the corresponding sets as defined above. Also, let $t:= \codim_{A_{\ell}}E$. Then 
    \begin{equation} \label{eq_weight_inequality}
    \wt(c_f) \geq 
        \frac{1}{ q^{\ell-1}(q-1)} \sum\limits_{x\in F \setminus A_{\ell-1} }  \wt(c_{f\wedge x}) + 
        \frac{1}{q^{\ell-1}(q^t-1)}\sum\limits_{x\in F\cap A_{\ell-1}} \wt(c_{f\wedge x}).
    \end{equation}
    Moreover, the above inequality is strict if the inequality in part (ii) of Lemma \ref{counting_fibres} is strict for some $L\in W(f)$ with $L\subseteq A_{\ell-1}$. 
\end{lemma}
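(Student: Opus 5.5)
The plan is a double count of the set $Z(\alpha',f)$ from Lemma \ref{counting_fibres}. I would compute $|Z(\alpha',f)|$ once by summing over the second coordinate $x$, and once by summing over the fibres of the surjection $\phi$. Here $W(f)$ denotes the support of $c_f$, that is, the set of $L\in\Oalm$ with $f(L)\neq 0$, so that $\wt(c_f)=|W(f)|$.

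For the first count, fix $x\in A_\ell$. The pairs $(L',x)$ lying in $Z(\alpha',f)$ are exactly the $L'\in\Omega_{\alpha'}(\ell-1,m)$ with $(f\wedge x)(L')\neq 0$. Their number is $\wt(c_{f\wedge x})$, where the weight is taken in $C_{\alpha'}(\ell-1,m)$. This number is zero precisely when $x\in E$. Hence
$$|Z(\alpha',f)|=\sum_{x\in F}\wt(c_{f\wedge x})=\sum_{x\in F\setminus A_{\ell-1}}\wt(c_{f\wedge x})+\sum_{x\in F\cap A_{\ell-1}}\wt(c_{f\wedge x}).$$

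For the second count, write $W(f)=W_1\sqcup W_2$, where $W_1$ consists of those $L$ with $L\not\subseteq A_{\ell-1}$ and $W_2$ of those with $L\subseteq A_{\ell-1}$. Since $\phi$ is surjective, Lemma \ref{counting_fibres} gives
$$|Z(\alpha',f)|=\sum_{L\in W(f)}|\phi^{-1}(L)|\le q^{\ell-1}(q-1)|W_1|+q^{\ell-1}(q^t-1)|W_2|.$$

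The main point, and the only real obstacle, is to split this total so that each of the two $x$-sums is matched with its own part of $W(f)$. The fibre over $L$ consists of pairs $(L',x)$ with $\langle L',x\rangle=L$, so $x\in L$.
\begin{itemize}
\item If $L\in W_2$, then $x\in L\subseteq A_{\ell-1}$, so every pair in $\phi^{-1}(W_2)$ has $x\in A_{\ell-1}$.
\item If $L\in W_1$, a pair $(L',x)$ in the fibre need not have $x\notin A_{\ell-1}$.
\end{itemize}
This suggests the inequality $\sum_{x\in F\setminus A_{\ell-1}}\wt(c_{f\wedge x})\le q^{\ell-1}(q-1)|W_1|$, because pairs with $x\notin A_{\ell-1}$ map into $W_1$ and fibres there have exactly $q^{\ell-1}(q-1)$ elements. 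The companion bound for pairs with $x\in A_{\ell-1}$ mapping into $W_1$ is less clear. To handle it, I would use the easy inequality $q^{\ell-1}(q^t-1)\ge q^{\ell-1}(q-1)$ when $t\ge 1$, and observe that if $t=0$ then $F=\emptyset$ and the inequality is trivial. With this, write
$$\sum_{x\in F\cap A_{\ell-1}}\wt(c_{f\wedge x})\le |\phi^{-1}(W_1)\cap\{x\in A_{\ell-1}\}|+q^{\ell-1}(q^t-1)|W_2|.$$

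Dividing the $W_1$ contributions by $q^{\ell-1}(q-1)$, and the $A_{\ell-1}$ contributions by the larger constant $q^{\ell-1}(q^t-1)$, then gives
$$\frac{1}{q^{\ell-1}(q-1)}\sum_{x\in F\setminus A_{\ell-1}}\wt(c_{f\wedge x})+\frac{1}{q^{\ell-1}(q^t-1)}\sum_{x\in F\cap A_{\ell-1}}\wt(c_{f\wedge x})\le |W_1|+|W_2|=\wt(c_f).$$
The key check here is that, for each fixed $L\in W_1$, the weighted count of its fibre is at most $1$. The fibre has $q^{\ell-1}(q-1)$ elements in total, each weighted by either $1/(q^{\ell-1}(q-1))$ or the smaller $1/(q^{\ell-1}(q^t-1))$, so the weighted count is at most $1$. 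For $L\in W_2$, all its fibre elements carry the weight $1/(q^{\ell-1}(q^t-1))$, and the fibre has at most $q^{\ell-1}(q^t-1)$ elements, so again the weighted count is at most $1$.

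So the cleanest approach is to assign each $(L',x)\in Z(\alpha',f)$ the weight $1/(q^{\ell-1}(q-1))$ if $x\notin A_{\ell-1}$ and $1/(q^{\ell-1}(q^t-1))$ otherwise. Summing these weights over $x$ gives the right-hand side of \eqref{eq_weight_inequality}. Summing over fibres gives at most $1$ per $L\in W(f)$, as just checked, and hence at most $\wt(c_f)$.

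Strictness follows from the same fibre sum. If some $L\subseteq A_{\ell-1}$ has $|\phi^{-1}(L)|<q^{\ell-1}(q^t-1)$, then the weighted count of that fibre is strictly less than $1$, and so the inequality is strict.
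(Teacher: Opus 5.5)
Your argument is correct, but it takes a more roundabout route than the paper because the obstacle you identify does not exist. Every $L'\in\Omega_{\alpha'}(\ell-1,m)$ satisfies $\dim(L'\cap A_{\ell-1})\ge \ell-1=\dim L'$, so $L'\subseteq A_{\ell-1}$. Hence $\langle L',x\rangle\not\subseteq A_{\ell-1}$ if and only if $x\notin A_{\ell-1}$, and your second bullet is false: the fibre over any $L\in W_1$ consists only of pairs with $x\in F\setminus A_{\ell-1}$.

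The paper uses exactly this to split $Z(\alpha',f)$. It identifies $\phi^{-1}(W_1)$ with the pairs having $x\in F\setminus A_{\ell-1}$, and $\phi^{-1}(W_2)$ with those having $x\in F\cap A_{\ell-1}$. This gives the exact identity $\sum_{x\in F\setminus A_{\ell-1}}\wt(c_{f\wedge x})=q^{\ell-1}(q-1)\theta_1$ and the bound $\sum_{x\in F\cap A_{\ell-1}}\wt(c_{f\wedge x})\le q^{\ell-1}(q^t-1)\theta_2$. Dividing and adding yields \eqref{eq_weight_inequality}, with no need to compare $q^t-1$ with $q-1$.

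Your weighted double count proves the same inequality and the strictness clause correctly. It tolerates mixed fibres, at the cost of needing $t\ge 1$ to compare the two weights. However, it only produces the combined inequality. The paper's split produces the two separate relations for $\theta_1$ and $\theta_2$, which is what Remark \ref{theta_2} uses to read off $\theta_2$ in the equality case (and hence what Lemma \ref{c} relies on).

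Finally, you should not treat $t=0$ as vacuous, since the right-hand side would then contain $1/(q^0-1)$. Instead, note that $c_f\neq 0$ and the surjectivity of $\phi$ force $F\neq\emptyset$, so $t\ge 1$ automatically.
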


\begin{proof}
    Let $\alpha'$, $Z(\alpha',f)$ and the map
    $$
    \phi:Z(\alpha',f)\longrightarrow W(f)
    $$
    be as defined in Lemma \ref{counting_fibres}. Define
    $$
    W_1 := \{L \in W(f) : L \nsubseteq A_{\ell-1}\}
    \quad \text{and} \quad
    W_2 := \{L \in W(f) : L \subseteq A_{\ell-1}\}.
    $$
    Let $\theta_1 = |W_1|$ and $\theta_2 = |W_2|$. Then,
    $$
    \wt(c_f)=\theta_1+\theta_2.
    $$
    Partitioning W(f) according to whether its elements are contained in $A_{\ell-1}$, we obtain
    $$
    |Z(\alpha',f)|
    =\sum_{L\in W(f)}|\phi^{-1}(L)|
    =\sum_{L\in W_1}|\phi^{-1}(L)|
    +\sum_{L\in W_2}|\phi^{-1}(L)|.
    $$
    
    On the other hand, counting the fibers of the natural projection
    $$
    Z(\alpha',f)\longrightarrow F,
    $$
    gives
    $$
    |Z(\alpha',f)|
    =\sum_{x\in F}\wt(c_{f\wedge x})
    =\sum_{x\in F\setminus A_{\ell-1}}\wt(c_{f\wedge x})
    +\sum_{x\in F\cap A_{\ell-1}}\wt(c_{f\wedge x}).
    $$

    Now, let $L'\in\Omega_{\alpha'}(\ell-1,m)$ and $x\in A_\ell$. Then
    $$
    L'+\langle x\rangle\in W(f)
    \quad\text{and}\quad
    L'+\langle x\rangle\nsubseteq A_{\ell-1}
    $$
    if and only if
    $$
    x\in F\setminus A_{\ell-1}
    \quad\text{and}\quad
    L'\in W(f\wedge x).
    $$
    Consequently,
    $$
    \sum_{L\in W_1}
    |\phi^{-1}(L)|
    =
    \sum_{x\in F\setminus A_{\ell-1}}
    \wt(c_{f\wedge x}),
    \quad \text{and} \quad
    \sum_{L\in W_2}
    |\phi^{-1}(L)|
    =
    \sum_{x\in F\cap A_{\ell-1}}
    \wt(c_{f\wedge x}).
    $$
    \noindent
    Applying Lemma \ref{counting_fibres} to estimate the fiber sizes gives
    $$
    \sum_{x\in F\setminus A_{\ell-1}}
    \wt(c_{f\wedge x})
    =\theta_1q^{\ell-1}(q-1)
    \quad \text{and} \quad
    \sum_{x\in F\cap A_{\ell-1}}
    \wt(c_{f\wedge x})
    \le
    \theta_2q^{\ell-1}(q^t-1).
    $$
    Combining these identities gives (\ref{eq_weight_inequality}). Moreover, if
    $$
    |\phi^{-1}(L)|<q^{\ell-1}(q^t-1)
    $$
    for some $L\subseteq A_{\ell-1}$, then the second estimate above is strict, and hence so is the inequality in (\ref{eq_weight_inequality}).
\end{proof}
As we stated earlier, several aspects of the Conjecture \cite[Conj. 5.6]{ghorpade_singh} was proved in the article itself. For example, Ghorpade-Singh proved the following theorem.

\begin{theorem} \label{minimum_weight_codewords}
	$d(C_{\alpha}(\ell,m)) = q^{\delta(\alpha)}$. Moreover, if $\ell > 1$ and $f\in \bigwedge^{m-\ell}V$ is such that $c_f$ is a minimum weight codeword in $C_{\alpha}(\ell,m)$, then $c_{f\wedge x}$  is a minimum weight codeword in $C_{\alpha'}(\ell-1,m)$ for every $x\in F$ and furthermore, we must have either (i) $t =1$ and $t'=0$, or (ii) $t' = t \ge 2$,  $\alpha_{\ell} - \alpha_{\ell -1} = 1$, and equality holds in (\ref{eq_weight_inequality}). Here $\alpha', E$ and $F$ are as before, while $t:= \codim_{A_{\ell}}E$  
    and $t':= \codim_{A_{\ell-1}}(E \cap A_{\ell-1})$.
\end{theorem}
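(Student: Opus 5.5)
We prove both assertions together by induction on $\ell$, using the fibre-counting inequality of Lemma \ref{weight_inequality}. For $\ell=1$ the Schubert variety $\Omega_\alpha(1,m)$ is the projective space $\mathbb{P}(A_1)$. Then $c_f$ is the evaluation of a nonzero linear form on $A_1$, so its weight is $q^{\alpha_1-1}=q^{\delta(\alpha)}$. For the induction step take $\ell>1$ and $c_f\neq 0$, so that $E\neq A_\ell$ and $t\ge 1$. For every $x\in F$ the codeword $c_{f\wedge x}$ is nonzero in $C_{\alpha'}(\ell-1,m)$. By induction its weight is therefore at least $q^{\delta(\alpha')}=q^{\delta(\alpha)-(\alpha_\ell-\ell)}$. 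The upper bound $d\le q^{\delta(\alpha)}$ is already known from Ghorpade–Lachaud, for instance via a Schubert decomposable $f$. So only the lower bound needs proof.

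We insert the induction bound into (\ref{eq_weight_inequality}). We count $|F\setminus A_{\ell-1}| = |A_\ell\setminus (E\cup A_{\ell-1})|$ and $|F\cap A_{\ell-1}|=|A_{\ell-1}\setminus E|$ using $t$ and $t'$. Since $E$ and $E\cap A_{\ell-1}$ are subspaces, the inclusion–exclusion gives
$$|F\setminus A_{\ell-1}|=q^{\alpha_\ell}-q^{\alpha_\ell-t}-q^{\alpha_{\ell-1}}+q^{\alpha_{\ell-1}-t'},\qquad |F\cap A_{\ell-1}|=q^{\alpha_{\ell-1}}-q^{\alpha_{\ell-1}-t'}.$$
The right side of (\ref{eq_weight_inequality}) is then bounded below by $q^{\delta(\alpha')}/q^{\ell-1}$ times
$$\frac{q^{\alpha_\ell}-q^{\alpha_\ell-t}-q^{\alpha_{\ell-1}}+q^{\alpha_{\ell-1}-t'}}{q-1}+\frac{q^{\alpha_{\ell-1}}-q^{\alpha_{\ell-1}-t'}}{q^t-1}.$$
The required inequality is that this quantity is at least $q^{\alpha_\ell-1}$.

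Lemma \ref{a_lminust_in_e} gives $A_{\ell-t}\subseteq E$, which constrains $t'$. Two cases are needed.
\begin{itemize}
\item If $t=1$, we have $A_{\ell-1}\subseteq E$, hence $t'=0$. The second term vanishes and the first becomes $(q^{\alpha_\ell}-q^{\alpha_\ell-1})/(q-1)=q^{\alpha_\ell-1}$, so equality holds.
\item If $t\ge 2$, we note $t'\le t$ since $\codim$ can only drop on intersecting. The first numerator is at least $q^{\alpha_\ell}-q^{\alpha_\ell-t}-q^{\alpha_{\ell-1}}(1-q^{-t'})$. Using $1/(q^t-1)\le 1/(q-1)$ and $\alpha_{\ell-1}\le\alpha_\ell-1$, an elementary estimate shows the bracket exceeds $q^{\alpha_\ell-1}$ strictly unless $t'=t$ and $\alpha_\ell-\alpha_{\ell-1}=1$. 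In that exceptional case equality is attained: the loss coming from $F\cap A_{\ell-1}$ is exactly balanced.
\end{itemize}

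This gives $\wt(c_f)\ge q^{\delta(\alpha)}$, hence the first assertion. For the ``moreover'' part, suppose $\wt(c_f)=q^{\delta(\alpha)}$. Then every inequality in the chain must be an equality. In particular each $c_{f\wedge x}$ with $x\in F$ must attain $q^{\delta(\alpha')}$, because the weights enter with strictly positive coefficients. The parameters must also fall into one of the two equality configurations above: (i) $t=1$, $t'=0$, or (ii) $t'=t\ge2$, $\alpha_\ell-\alpha_{\ell-1}=1$, with equality in (\ref{eq_weight_inequality}).

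The main obstacle is the $t\ge2$ numerical inequality. It has to be shown carefully as a function of $t'\in[0,t]$ and of the gap $\alpha_\ell-\alpha_{\ell-1}$, so that the equality cases are pinned down exactly. I would treat the bracket as affine in $q^{-t'}$ and check its two endpoints. At $t'=0$ the bracket is $\frac{q^{\alpha_\ell}-q^{\alpha_\ell-t}}{q-1}>q^{\alpha_\ell-1}$ for $t\ge2$. At $t'=t$ I would verify the bound directly, with equality exactly when the gap is $1$.
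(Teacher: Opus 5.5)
Your argument is correct and is essentially the proof of Ghorpade--Singh that the paper cites for this theorem; the paper itself does not reprove it. Both go by induction on $\ell$: feed the inductive bound $\wt(c_{f\wedge x})\ge q^{\delta(\alpha')}$ into Lemma \ref{weight_inequality}, count $|F\setminus A_{\ell-1}|$ and $|F\cap A_{\ell-1}|$ via $t,t'$, force $t'=0$ when $t=1$ by Lemma \ref{a_lminust_in_e}, and for $t\ge 2$ use $\frac{1}{q-1}-\frac{1}{q^t-1}>0$ to see that the bracket is minimized exactly at $t'=t$, $\alpha_{\ell-1}=\alpha_\ell-1$, where it equals $q^{\alpha_\ell-1}$. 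The one step you leave implicit is why $c_f\neq 0$ forces $F\neq\emptyset$ (so $t\ge 1$); this follows from the surjectivity of $\phi$ in Lemma \ref{counting_fibres}.
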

This theorem is going to play a key role in settling the classification conjecture. The following remark is useful and hence is worth adding it. 
\begin{remark} \label{theta_2}
    Suppose that $c_f$ is a minimum weight codeword of $C_{\alpha}(\ell,m)$ for some $f \in \bigwedge^{m-\ell}V $. Let $\theta_1:=|\{L \in W(f) : L \nsubseteq A_{\ell-1}\}| \text{ and } \theta_2:=|\{L \in W(f) : L \subseteq A_{\ell-1}\}|$ as in the proof of Lemma \ref{weight_inequality}. By Theorem \ref{minimum_weight_codewords}, $f \wedge x$ is a minimum weight codeword in $C_{\alpha'}(\ell-1,m)$ for all $x \in F$ and hence for $x \in F \cap A_{\ell-1}$. With equality in (\ref{eq_weight_inequality}), we have
    $$
    \theta_2=\dfrac{1}{q^{\ell-1}(q^t-1)}\sum_{x\in F \cap A_{\ell-1}} \wt(c_{f \wedge x}) 
    = \dfrac{1}{q^{\ell-1}(q^t-1)}|F \cap A_{\ell-1}|\,q^{\delta(\alpha')}.
    $$  
\end{remark}


\section{A Classification of Minimum Weight Codewords of $\Calm$}
In this section, we aim to settle the Conjecture \ref{conj:1}. It has already been proved \cite{ghorpade_singh} that if $f\in\bmlV$ is a Schubert decomposable element, then the codeword $c_f\in \Calm$ is a minimum weight codeword. Further, Ghorpade and Singh also proved that if $f \in \bigwedge^{m-\ell} V$ is decomposable such that the codeword $c_f\in \Calm$ is a minimum weight codeword, then $f$ is Schubert decomposable. Therefore, to settle the conjecture and give a complete characterization of the minimum weight codewords of Schubert codes, it is enough to prove that every minimum weight codeword of $C_{\alpha}(\ell,m)$ is of the form $c_h$ for some decomposable $h \in \bigwedge^{m-\ell} V$. As we have remarked, in case when $\alpha$ is completely consecutive then $\Omega_{\alpha}(\ell,m)$ is precisely $G(\ell, A_\ell)$ and the Schubert code $\Calm$ is the Grassmann code $C(\ell, \alpha_\ell)$. Further, in this case, the notion of Schubert decomposability coincides with the notion of complete decomposability. Hence, the conjecture is true by Nogin \cite{nogin_minimum_distance}. Consequently, we assume that $\alpha$ is not completely consecutive. Further, the Schubert variety in the case $\ell=1$ is the projective space $\mathbb{P}(A_1)$, and hence we may assume $\ell\geq 2$. The following Lemma will play a key role in proving the conjecture.

\begin{lemma} \label{a}
 	Let $f \in\bigwedge^{m-\ell}V$ be such that $c_f$ is a minimum weight codeword of $C_\alpha(\ell,m)$ and let $E$ and $F$ be the corresponding sets as defined earlier. If $t:=\codim_{A_{\ell}} E \geq 2$, then  
    $$
    c_{f}=c_{e_{\alpha_\ell} \wedge f_1},
    $$
    where $f_1 \in \bigwedge^{m-\ell-1}V', e_{\alpha_\ell} \in (E \cap A_{\ell}) \setminus A_{\ell-1}$ and $V^\prime = \langle \{e_1, \ldots, e_m\} \setminus \{e_{\alpha_\ell}\} \rangle$.
 \end{lemma}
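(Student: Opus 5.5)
By Theorem \ref{minimum_weight_codewords}, the hypothesis $t\ge 2$ forces case (ii): $t'=t$, $\alpha_\ell-\alpha_{\ell-1}=1$, and equality holds in (\ref{eq_weight_inequality}). Thus $A_{\ell-1}$ is a hyperplane of $A_\ell$. Since $\codim_{A_{\ell-1}}(E\cap A_{\ell-1})=t=\codim_{A_\ell}E$, we get $\dim E=\dim(E\cap A_{\ell-1})+1$. Hence $E\not\subseteq A_{\ell-1}$, and we can choose a vector $e_{\alpha_\ell}\in E\setminus A_{\ell-1}$.

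Extend a basis $e_1,\dots,e_{\alpha_{\ell-1}}$ of $A_{\ell-1}$ (compatible with the flag) by $e_{\alpha_\ell}$ to a basis of $A_\ell$, and then to a basis $e_1,\dots,e_m$ of $V$. Relative to this basis, with $V'$ as in the statement, write uniquely
$$
f=e_{\alpha_\ell}\wedge f_1+f_2,\qquad f_1\in \textstyle\bigwedge^{m-\ell-1}V',\ f_2\in \bigwedge^{m-\ell}V'.
$$
It then suffices to show that $c_{f_2}=0$ in $C_\alpha(\ell,m)$, i.e. $f_2(L)=0$ for every $L\in\Omega_\alpha(\ell,m)$. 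Via the identification $e_1\wedge\cdots\wedge e_m=1$, the term $f_2$ pairs nontrivially only with Plücker coordinates $P_\beta$ for which $\alpha_\ell\in\beta$. So it is enough to show that $f_2$ vanishes on all $L\in\Omega_\alpha(\ell,m)$ containing a vector with nonzero $e_{\alpha_\ell}$-component, that is, on all $L\not\subseteq A_{\ell-1}$. Such an $L$ can be written as $L=L'+\langle x\rangle$ with $L'\in\Omega_{\alpha'}(\ell-1,m)$, $L'\subseteq A_{\ell-1}$, and $x=e_{\alpha_\ell}+y$ for some $y\in A_{\ell-1}$. For any $L'\subseteq A_{\ell-1}\subseteq\langle e_j: j\ne\alpha_\ell\rangle$, we have $(e_{\alpha_\ell}\wedge f_1)(L')=0$ up to sign, because $e_{\alpha_\ell}\wedge f_1\wedge L'$ contains no repeated basis vector issue beyond those in $f_1\wedge L'$. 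Therefore $(f\wedge e_{\alpha_\ell})(L')=\pm(f_2\wedge e_{\alpha_\ell})(L')$, and $e_{\alpha_\ell}\in E$ gives $f_2(L'+\langle e_{\alpha_\ell}\rangle)=0$ for all such $L'$.

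The main obstacle is handling $x=e_{\alpha_\ell}+y$ with $y\in A_{\ell-1}\setminus E$. By linearity, $f_2(L'+\langle x\rangle)=f_2(L'+\langle e_{\alpha_\ell}\rangle)+f_2(L'+\langle y\rangle)$. The first term vanishes by the previous step. For the second term, $y\in A_{\ell-1}$ and the wedge with $e_{\alpha_\ell}$ has been stripped off, so $f_2(L'+\langle y\rangle)=\pm (f\wedge y)(L')$ minus a correction coming from $e_{\alpha_\ell}\wedge f_1$. That correction vanishes since $L'+\langle y\rangle\subseteq A_{\ell-1}$ has no $e_{\alpha_\ell}$ component, so $(e_{\alpha_\ell}\wedge f_1)(L'+\langle y\rangle)=0$. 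Consequently, $f_2$ agrees with $f$ on subspaces lying in $A_{\ell-1}$. The remaining difficulty is to show that $f$ restricted to $\Omega_\alpha\cap G(\ell,A_{\ell-1})$ vanishes, or else to change the choice of decomposition. If this fails, I would absorb that part into $f_1$: a codeword restricted to $G(\ell,A_{\ell-1})$-type subspaces can be rewritten as evaluations on $L'+\langle e_{\alpha_\ell}\rangle$ by replacing $f_1$ with $f_1\pm \iota(f_2)$. Alternatively, I would use the equality case together with Remark \ref{theta_2} to count $\theta_2$ and match it against the weight of $c_{e_{\alpha_\ell}\wedge f_1}$, forcing $c_f-c_{e_{\alpha_\ell}\wedge f_1}$, which is supported in $W_2$, to vanish. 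I expect this counting and matching step to be the crux. It should use Lemma \ref{a_lminust_in_e}, which gives $A_{\ell-t}\subseteq E$, to control how $f$ behaves on subspaces inside $A_{\ell-1}$.
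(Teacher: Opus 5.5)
Your setup agrees with the paper's proof. You invoke Theorem \ref{minimum_weight_codewords} to get $t'=t$ and $\alpha_\ell-\alpha_{\ell-1}=1$, pick $e_{\alpha_\ell}\in E\setminus A_{\ell-1}$ with $A_{\ell-1}\subseteq V'$, and split $f=e_{\alpha_\ell}\wedge f_1+f_2$. You then correctly get $f_2\wedge e_{\alpha_\ell}\wedge P'=f\wedge e_{\alpha_\ell}\wedge P'=0$ for representatives $P'$ of $L'\in\Omega_{\alpha'}(\ell-1,m)$. That identity holds simply because $e_{\alpha_\ell}\wedge e_{\alpha_\ell}=0$; your expression ``$(e_{\alpha_\ell}\wedge f_1)(L')$'' is an $(m-1)$-vector, not a scalar.

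The argument breaks at the term $f_2\wedge y\wedge P'$ with $y\in A_{\ell-1}$. You claim that $e_{\alpha_\ell}\wedge f_1\wedge y\wedge P'=0$, and hence that $f_2$ agrees with $f$ on subspaces of $A_{\ell-1}$. This is backwards. For $w\in\bigwedge^{\ell}V'$ one has $f_1\wedge w\in\bigwedge^{m-1}V'$, and wedging with $e_{\alpha_\ell}\notin V'$ is injective there, so $e_{\alpha_\ell}\wedge f_1\wedge w$ is in general nonzero. What vanishes is $f_2\wedge w\in\bigwedge^m V'=0$, because $\dim V'=m-1$. So on $G(\ell,A_{\ell-1})$ it is $e_{\alpha_\ell}\wedge f_1$, not $f_2$, that agrees with $f$.

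As a result, the ``remaining difficulty'' you isolate is not only unnecessary but false under the hypotheses. That difficulty was showing $f$ vanishes on $\Omega_\alpha(\ell,m)\cap G(\ell,A_{\ell-1})$. But since $t'=t\ge 2$, we have $F\cap A_{\ell-1}\neq\emptyset$, so $\theta_2>0$ by Remark \ref{theta_2}, i.e.\ $W_2\neq\emptyset$. Your fallback counting argument inherits the same inversion: $c_f-c_{e_{\alpha_\ell}\wedge f_1}=c_{f_2}$ vanishes identically on $W_2$ rather than being supported there. The ``absorb into $f_1$'' suggestion is not an argument.

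The missing step is the observation you already used in your own reduction, that $f_2$ pairs only with $P_\beta$ having $\alpha_\ell\in\beta$. Since $y$ and $P'$ lie in $A_{\ell-1}\subseteq V'$, $f_2\wedge y\wedge P'\in\bigwedge^m V'=0$. Hence $f_2\wedge(e_{\alpha_\ell}+y)\wedge P'=0$, so $c_{f_2}=0$ on $\Omega_\alpha(\ell,m)$. This is exactly the paper's two-case argument ($u_\ell\in A_{\ell-1}$ versus $u_\ell=v_{\ell-1}+\lambda e_{\alpha_\ell}$), and it needs neither counting nor Lemma \ref{a_lminust_in_e}.
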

 
 \begin{proof}
    Since $c_f$ is a minimum weight codeword of $\Calm$ and $t \geq 2$, by Theorem \ref{minimum_weight_codewords}, we have $t'=t \geq 2 \text{ and } \alpha_\ell -\alpha_{\ell -1}=1$. Consequently, $\dim (E \cap A_{\ell-1} )= \dim E-1=\alpha_\ell-t-1$. Thus, we can choose a basis $\{e_1, \ldots, e_m\}$ of $V$ such that
    $$
    A_{\ell-1}=\langle e_1,\ldots,e_{\alpha_{\ell}-1} \rangle, \quad A_{\ell}=\langle e_1,\ldots,e_{\alpha_\ell} \rangle \quad 
    \text{and} \quad
    E=\langle e_1,\ldots,e_{\alpha_\ell-t-1},e_{\alpha_{\ell}}\rangle.
    $$
    Let $V'=\langle \{e_1, \ldots, e_m\} \setminus \{e_{\alpha_\ell}\} \rangle$. Note that we can write $f \in \bigwedge^{m-\ell} V$ as $f=e_{\alpha_\ell} \wedge f_1 +f_2,$ where $f_1 \in \bigwedge^{m-\ell-1} V'$ and $f_2 \in \bigwedge^{m-\ell} V'$. We claim that 
    $$
    c_f(P)= c_{e_{\alpha_\ell} \wedge f_1}(P), \text{ for all }P\in \Oalm.
    $$

    \noindent
    Let $L \in \Omega_\alpha(\ell,m)$ and let $P=u_\ell \wedge \cdots \wedge u_1$ with $u_i \in A_i$ for $ 1 \leq i \leq \ell,$ be a representative of $L$ in $\bigwedge^\ell V$. Two cases arise: either $u_\ell \in A_{\ell-1}$ or $u_\ell \notin A_{\ell-1}$. If $u_\ell \in A_{\ell-1}$ then $L\subset A_{\ell-1}\subset V^\prime$. Consequently, $f_2(L)\in \bigwedge^m V^\prime.$ But $\dim V^\prime =m-1$, therefore we have $f_2(L)=0$. On the other hand, if $u_\ell \notin A_{\ell-1}$ then as $A_{\ell-1}$ is a hyperplane in $A_\ell$, we can write  $u_\ell=v_{\ell-1}+\lambda e_{\alpha_\ell}$ for some $v_{\ell-1} \in A_{\ell-1} \text{ and } \lambda \in \mathbb{F}_q \text{ with } \lambda \neq 0.$ As  in the previous case we get $f_2 \wedge v_{\ell-1} \wedge u_{\ell-1} \wedge \cdots \wedge u_1=0$. Now.
    \begin{align*}
      f_2 \wedge L &=  f_2 \wedge (v_{\ell-1}+\lambda e_{\alpha_{\ell}}) \wedge u_{\ell-1} \wedge \dots \wedge u_1\\
                &= f_2 \wedge v_{\ell-1}  \wedge u_{\ell-1} \wedge \dots \wedge u_1 + f_2 \wedge \lambda e_{\alpha_{\ell}} \wedge u_{\ell-1} \wedge \dots \wedge u_1\\
                &=  f_2 \wedge \lambda e_{\alpha_{\ell}} \wedge u_{\ell-1} \wedge \dots \wedge u_1\\
                &=  (f- e_{\alpha_\ell} \wedge f_1) \wedge \lambda e_{\alpha_{\ell}} \wedge u_{\ell-1} \wedge \dots \wedge u_1\\
                &= \lambda(f \wedge e_{\alpha_\ell} \wedge u_{\ell-1} \wedge \cdots \wedge u_1).
    \end{align*}
    
    Further, since $e_{\alpha_\ell} \in E$,  by definition, $c_{f \wedge e_{\alpha_\ell}}$ is a zero codeword in $C_{\alpha'}(\ell-1,m)$. In particular, $f \wedge e_{\alpha_\ell} \wedge u_{\ell-1} \wedge \cdots \wedge u_1=0$. This implies that $c_{f_2}(P)=0$. Consequently,  
    $$
    c_f(P)=c_{e_{\alpha_\ell} \wedge f_1}(P) \text{ for all }P\in\Oalm.
    $$
\end{proof}

\begin{remark} \label{b}
    Since $\alpha$ is non-consecutive, there exists  an index $i \in \{1, \dots, \ell-1\}$ such that $\alpha_{i+1}-\alpha_i \geq 2$. Let $k$ be the maximum of all such indices, i.e., $k=\max\{i: \alpha_{i+1}-\alpha_i \geq 2\}$. Define
    $$
    \alpha''= \begin{cases}
    \alpha_i & \text{if } i \leq k, \\
    \alpha_i-1 & \text{if } i > k.
    \end{cases}
    $$
    Since $k$ is the largest index for which
    $\alpha_{j+1}-\alpha_j\ge2$, the entries
    $\alpha_{k+1},\ldots,\alpha_\ell$ are consecutive. Consequently,
    $\alpha''\in I(\ell,m-1)$. Note that under the hypothesis of the above lemma, $f_1$ naturally corresponds to a codeword in $C_{\alpha''}(\ell,m-1)$.  
\end{remark}

 \begin{lemma} \label{c}
    Let $f \in\bigwedge^{m-\ell}V$ be such that $c_f$ is a minimum weight codeword of $C_\alpha(\ell,m)$, and let  $t:=\codim_{A_{\ell}} E \geq 2$. If $f_1\in\bigwedge^{m-1-\ell}V^\prime$ be the exterior product obtained in the Lemma \ref{a} then  the codeword $c_{f_1}\in C_{\alpha''}(\ell,m-1)$ is a minimum weight codeword.
 \end{lemma}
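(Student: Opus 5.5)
The plan is to compare weights directly. I would show that $\wt(c_f)\ge q^{\ell-k}\,\wt(c_{f_1})$, where $k$ is as in Remark \ref{b}, and note that $\delta(\alpha'')=\delta(\alpha)-(\ell-k)$. Then $q^{\delta(\alpha)}=\wt(c_f)\ge q^{\ell-k}\wt(c_{f_1})\ge q^{\ell-k}q^{\delta(\alpha'')}=q^{\delta(\alpha)}$, where the last inequality is the minimum distance of $C_{\alpha''}(\ell,m-1)$ from Theorem \ref{minimum_weight_codewords}. This forces $\wt(c_{f_1})=q^{\delta(\alpha'')}$. We also get $c_{f_1}\neq 0$, since otherwise the inequality would give $c_f=0$.

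\textbf{Step 1: set-up.} Keep the basis chosen in the proof of Lemma \ref{a}, so that $A_i\subseteq A_{\ell-1}=\langle e_1,\dots,e_{\alpha_\ell-1}\rangle\subseteq V'$ for $i\le \ell-1$. By Lemma \ref{a}, $c_f=c_{e_{\alpha_\ell}\wedge f_1}$. Let $\pi:V\to V'$ be the projection that kills $e_{\alpha_\ell}$. Write $L=\langle u_1,\dots,u_\ell\rangle$ and $u_i=\pi(u_i)+\lambda_i e_{\alpha_\ell}$. Expanding, every term containing $e_{\alpha_\ell}$ twice vanishes, so
$$(e_{\alpha_\ell}\wedge f_1)(u_\ell\wedge\cdots\wedge u_1)=e_{\alpha_\ell}\wedge f_1\wedge \pi(u_\ell)\wedge\cdots\wedge\pi(u_1).$$
Hence $L\in W(f)$ if and only if $\dim\pi(L)=\ell$ and $f_1(\pi(L))\neq 0$ as a functional on $\bigwedge^\ell V'$. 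Here $\bigwedge^{m-1}V'$ is identified with $\Fq$ via $e_{\alpha_\ell}\wedge(\cdot)$.

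\textbf{Step 2: the image lies in $\Omega_{\alpha''}$.} Realize $\Omega_{\alpha''}(\ell,m-1)\subseteq G(\ell,V')$ with the flag $A''_i=A_i$ for $i\le k$. For $i>k$, take $A''_i$ to be subspaces of $A_{\ell-1}$ of dimension $\alpha_i-1$ containing $A_k$; for instance $A''_i=A_{i-1}$, since the entries $\alpha_{k+1},\dots,\alpha_\ell$ are consecutive. In the block-end description of Schubert varieties, the conditions for $\alpha''$ are then exactly $\dim(M\cap A_{p_j})\ge j$ for the block ends $p_j\le k$, together with $M\subseteq A_{\ell-1}$.

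If $L\in\Omega_\alpha(\ell,m)$ and $\dim\pi(L)=\ell$, then $\pi(L)\subseteq\pi(A_\ell)=A_{\ell-1}$. Since $\pi$ is the identity on $A_{p_j}\subseteq V'$, we also have $\pi(L)\cap A_{p_j}\supseteq L\cap A_{p_j}$. So $\pi(L)\in\Omega_{\alpha''}$, and therefore $W(f)=\bigsqcup_{M\in W(f_1)}\{L\in\Omega_\alpha:\pi(L)=M\}$.

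\textbf{Step 3: counting fibres (the main point).} For $M\in W(f_1)$, the $\ell$-spaces $L$ with $\pi(L)=M$ are exactly the graphs $\{x+\varphi(x)e_{\alpha_\ell}: x\in M\}$ of linear maps $\varphi:M\to\Fq$. Each such $L$ lies in $A_\ell$, and $L\cap A_{p_j}=\{x\in M\cap A_{p_j}:\varphi(x)=0\}$.

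Since $M\in\Omega_{\alpha''}$, I would pick a single $k$-dimensional subspace $U\subseteq M\cap A_k$ with $\dim(U\cap A_{p_j})\ge j$ for all $p_j\le k$. Such a $U$ exists by building a compatible chain inside the nested intersections $M\cap A_{p_j}$. Every $\varphi$ vanishing on $U$ then yields $L\in\Omega_\alpha$ (the condition at $A_\ell$ is automatic), and there are $q^{\ell-k}$ such $\varphi$. This gives $|\pi^{-1}(M)\cap\Omega_\alpha|\ge q^{\ell-k}$ and hence $\wt(c_f)\ge q^{\ell-k}\wt(c_{f_1})$.

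The delicate point is precisely this lower bound. If $\dim(M\cap A_k)>k$, then requiring $\varphi$ to vanish on all of $M\cap A_k$ would give too few maps. That is why I would fix a well-chosen $k$-dimensional $U$ rather than the full intersection.
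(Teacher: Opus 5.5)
Your proof is correct, and it takes a genuinely different route from the paper's.

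\textbf{The paper's route.} The paper checks $c_{f_1}\neq 0$ by direct evaluation. It then compares $\wt(c_{f_1})$ only with $\theta_2$, the number of support points of $c_f$ lying inside $A_{\ell-1}$, using the inclusion $W_{\alpha''}(f_1)\subseteq W_2$. It computes $\theta_2=q^{\delta(\alpha)-t}$ from the equality case of Lemma~\ref{weight_inequality} (Remark~\ref{theta_2}), which gives $q^{\delta(\alpha)-(\ell-k)}\le \wt(c_{f_1})\le q^{\delta(\alpha)-t}$. It must then prove separately that $t=\ell-k$. For this it uses Lemma~\ref{a_lminust_in_e}: if $t<\ell-k$, then $E=A_{\ell-t}\subseteq A_{\ell-1}$, hence $t'=t-1$, a contradiction.

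\textbf{Your route.} Your map $L\mapsto\pi(L)$ sends $W(f)$ to $W_{\alpha''}(f_1)$. Each fibre contains the $q^{\ell-k}$ graphs of functionals on $M$ vanishing on a $k$-dimensional $U\subseteq M\cap A_k$ with $\dim(U\cap A_{p_j})\ge p_j$. This compares $\wt(c_{f_1})$ with all of $\wt(c_f)$, not just with $\theta_2$.

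\textbf{What each buys.} You need nothing beyond Lemma~\ref{a} and the minimum distance of $C_{\alpha''}(\ell,m-1)$. Remark~\ref{theta_2} and Lemma~\ref{a_lminust_in_e} are not used, and you never have to determine $t$. You also learn that every fibre has exactly $q^{\ell-k}$ points. The paper's route, in exchange, yields $t=\ell-k$ and $W_{\alpha''}(f_1)=W_2$ as by-products.

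\textbf{Small repairs.}
\begin{enumerate}
\item $c_{f_1}\neq 0$ does not follow from the lower bound $\wt(c_f)\ge q^{\ell-k}\wt(c_{f_1})$, which is vacuous when $c_{f_1}=0$. It follows from your Step 2: every $L\in W(f)$ projects to a point of $W_{\alpha''}(f_1)$, and $W(f)\neq\emptyset$. You need this before invoking the minimum distance of $C_{\alpha''}(\ell,m-1)$.
\item $A''_{k+1}=A_k$ has the wrong dimension, because $\alpha_{k+1}-\alpha_k\ge 2$. Take a hyperplane of $A_{k+1}$ containing $A_k$ instead.
\item The block-end conditions should read $\dim(M\cap A_{p_j})\ge p_j$, not $\ge j$.
\end{enumerate}
Items 2 and 3 do not affect the argument.
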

 \begin{proof}
    First, we claim that $c_{f_1}\in C_{\alpha''}(\ell,m-1) $ is a nonzero codeword. Suppose on the contrary $c_{f_1}=0$. For any $L\in\Oalm$, either $L\subseteq A_{\ell-1}$ or $L\nsubseteq A_{\ell-1}$. If $L\subseteq A_{\ell-1}$ then $L\in  \Omega_{\alpha''}(\ell,m-1)$. Consequently, $c_{f_1}$ being a zero codeword of $C_{\alpha''}(\ell,m-1)$ gives $f(L)=e_{\alpha_\ell} \wedge f_1(L)= 0. $ On the other hand, if $L\nsubseteq A_{\ell-1}$, we can find a representative $P=u_\ell \wedge \cdots \wedge u_1$ of $L$ satisfying with $u_i \in A_i$ for $ 1 \leq i \leq \ell,$ and $u_\ell\in A_\ell\setminus A_{\ell-1}.$ Now, since $t\geq 2$ and $f$ is a minimum weight codeword of $\Calm$, we have $A_{\ell-1}$ is a hyperplane in $A_\ell$ and hence, we can write $u_\ell=v_{\ell-1}+\lambda e_{\alpha_\ell}$ for some $v_{\ell-1} \in A_{\ell-1} \text{ and nonzero } \lambda \in \Fq$. Now, 
    \begin{align*}
        f(L)&= {e_{\alpha_\ell} \wedge f_1}(L)\\
        &=e_{\alpha_\ell} \wedge f_1 \wedge u_\ell \wedge \cdots \wedge u_1\\
        &=e_{\alpha_\ell} \wedge f_1 \wedge v_{\ell-1} \wedge \cdots \wedge u_1\\
        &=0
    \end{align*}
    where the last equality follows as, either $v_{\ell-1} \wedge \cdots \wedge u_1=0$ or $v_{\ell-1} \wedge \cdots \wedge u_1$ represents a point in $\Omega_{\alpha''}(\ell,m-1)$. This implies, in particular, that $f(L)=0$ for representatives of all the points of $\Oalm$ and hence must be a zero codeword. But this contradicts the fact that $c_f$ is  a minimum weight codeword. Thus, we have $c_{f_1}\neq 0.$

    Next, let $W_1$, $W_2$, $\theta_1$, and $\theta_2$ be as in Lemma \ref{weight_inequality}. We know that $\wt(c_f)=\theta_1+\theta_2.$ Let $W_{\alpha''}(f_1)$ be the support of $f_1$ in $\Omega_{\alpha''}(\ell,m-1)$, namely,
    $$
    W_{\alpha''}(f_1)
    :=
    \{L \in \Omega_{\alpha''}(\ell,m-1) : f_1(L)\neq 0\}.
    $$
    Observe that if $L\in W_{\alpha''}(f_1)$ then $L\subset A_{\ell-1}$ and $f_1(L)\neq 0$. Further, since $e_{\alpha_\ell}\notin V^\prime$ and $f_1(L)\in \bigwedge^{m-1}V^\prime$, $e_{\alpha_\ell}\wedge f_1(L)\neq 0$. Consequently $L\in W_2$. In other words,  $W_{\alpha''}(f_1)\subseteq W_2$. Consequently, $ \left|W_{\alpha''}(f_1)\right|\le \theta_2 $. As $c_{f_1}$ is a non-zero codeword in $C_{\alpha''}(\ell,m-1)$, we have 
    $$
    q^{\delta(\alpha'')}=q^{\delta(\alpha)-(\ell-k)} \leq \left|W_{\alpha''}(f_1)\right|\le \theta_2.
    $$
    Further, since $c_f$ is a minimum weight codeword in $C_{\alpha}(\ell,m)$, by Remark \ref{theta_2}, we have 
    \begin{align*}
      \theta_2 &= \dfrac{1}{q^{\ell-1}(q^t-1)}|F \cap A_{\ell-1}| \, q^{\delta(\alpha')} \\
               & =\dfrac{1}{q^{\ell-1}(q^t-1)}(q^{\alpha_{\ell}-1}-q^{\alpha_{\ell}-1-t})\,q^{\delta(\alpha)-\alpha_\ell+\ell} \\
               &=q^{\delta(\alpha)-t}.  
    \end{align*}

    Note that $\wt(c_{f_1})\leq \theta_2$ implies that $t\leq \ell-k$ and proving the equality guarantees that $c_{f_1}$ is a minimum weight codeword. Suppose, to the contrary, that $t < \ell-k$, that is, $t=\ell-k-j$ for some positive integer $j$. Then by Lemma \ref{a_lminust_in_e}, we have
    $$
    A_{k+j} = A_{\ell-t} \subset E
    $$
  Since $\alpha_{k+1},\ldots,\alpha_{k+j+t}$ are consecutive with $\alpha_{k+j+t}=\alpha_\ell$, we get
    $$
    \dim A_{k+j}
    =\alpha_\ell-t
    =\dim E
    $$
   Therefore $A_{\ell-t}=E$. Further, as $t\ge2$, it follows that
    $$
    A_{\ell-t}\subset A_{\ell-1}.
    $$
    Consequently, $E=A_{\ell-t} \subset A_{\ell-1}$ and
    $$
    t'
    =\codim_{A_{\ell-1}}(E\cap A_{\ell-1})
    =\alpha_{\ell-1}-1-\dim(E\cap A_{\ell-1})
    =t-1 < t,
    $$
    which is a contradiction. Therefore, $t=\ell-k$, and hence
    $c_{f_1}$ is a minimum weight codeword in
    $C_{\alpha''}(\ell,m-1)$.
    \end{proof}

\begin{theorem} \label{maintheorem}
    Assume that $\alpha$ is non-consecutive and $\Calm$ be the corresponding Schubert code. If $c$ is  a minimum weight codeword of $C_{\alpha}(\ell,m)$, then $c = c_h$ for some decomposable $h\in\bigwedge^{m-\ell}V$.
\end{theorem}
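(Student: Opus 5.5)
We argue by induction on $\ell+m$ (equivalently on $\ell$ and then on $m$). The base cases are already settled: for $\ell=1$ the Schubert variety is $\mathbb{P}(A_1)$, and every nonzero element of $\bigwedge^{m-1}V$ is decomposable. For completely consecutive $\alpha$ we have Nogin's theorem for $C(\ell,\alpha_\ell)$. So let $\alpha$ be non-consecutive with $\ell\ge 2$, let $c=c_f$ be a minimum weight codeword, and let $E,F,t,t'$ be as before. By Theorem \ref{minimum_weight_codewords}, exactly one of two cases occurs: (i) $t=1$ and $t'=0$, or (ii) $t=t'\ge 2$ with $\alpha_\ell-\alpha_{\ell-1}=1$.

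\emph{Case (ii).} Lemma \ref{a} gives $c_f=c_{e_{\alpha_\ell}\wedge f_1}$ with $f_1\in\bigwedge^{m-\ell-1}V'$. Lemma \ref{c} shows that $c_{f_1}$ is a minimum weight codeword of $C_{\alpha''}(\ell,m-1)$, computed inside $V'$ with the flag $A_1\subset\cdots\subset A_k\subset A_{k+1}\cap V'\subset\cdots$. Here one checks that $A_{\ell-1}\subseteq V'$ and that $\Omega_{\alpha''}(\ell,m-1)$ sits inside $G(\ell,V')$ via $A_i'=A_i$ for $i\le k$ and $A_i'=A_i\cap V'$ (a hyperplane) for $i>k$. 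By the induction hypothesis, applied either to $\alpha''$ if it is non-consecutive or to Nogin's theorem otherwise, we get $c_{f_1}=c_{h_1}$ in $C_{\alpha''}(\ell,m-1)$ for some decomposable $h_1\in\bigwedge^{m-\ell-1}V'$. Set $h=e_{\alpha_\ell}\wedge h_1$, which is decomposable. It remains to verify $c_{e_{\alpha_\ell}\wedge f_1}=c_h$ on $\Oalm$. We repeat the computation in the proof of Lemma \ref{c}. For $L\in\Oalm$, write a representative $u_\ell\wedge\cdots\wedge u_1$ and replace $u_\ell$ by its component $v_{\ell-1}\in A_{\ell-1}$; this component drops out after wedging with $e_{\alpha_\ell}$. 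Then $e_{\alpha_\ell}\wedge g(L)=e_{\alpha_\ell}\wedge g(v_{\ell-1}\wedge u_{\ell-1}\wedge\cdots\wedge u_1)$ for $g=f_1,h_1$. The wedge on the right is either zero or represents a point of $\Omega_{\alpha''}(\ell,m-1)$, and on such points $f_1$ and $h_1$ agree. This step needs a careful check that $\langle v_{\ell-1},u_{\ell-1},\dots,u_1\rangle$ meets the $A_i'$ correctly, using $u_i\in A_i$ and the consecutiveness of $\alpha_{k+1},\dots,\alpha_\ell$.

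\emph{Case (i).} Here $E$ is a hyperplane of $A_\ell$ not containing $A_{\ell-1}$, and Lemma \ref{a_lminust_in_e} gives $A_{\ell-1}\subseteq E$. This contradicts $t'=0$ unless... in fact $t'=0$ means $A_{\ell-1}\subseteq E$, which is consistent. Choose $x\in F\setminus A_{\ell-1}$. By Theorem \ref{minimum_weight_codewords}, $c_{f\wedge x}$ is a minimum weight codeword of $C_{\alpha'}(\ell-1,m)$, so by induction on $\ell$ we have $c_{f\wedge x}=c_{g}$ for some decomposable $g\in\bigwedge^{m-\ell+1}V$. The plan is to show that $f$ agrees on $\Oalm$ with a contraction of $g$: pick a linear functional $\varphi$ on $V$ with $\ker\varphi\cap A_\ell=E$ and $\varphi(x)=1$, and take $h=\iota_\varphi g$, or equivalently $g=h\wedge x$ with $h$ decomposable after adjusting $g$ within its coset. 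Every $L\in\Oalm$ not contained in $E$ has a representative $x'\wedge L'$ with $x'\in F$ and $L'\in\Omega_{\alpha'}(\ell-1,m)$. We then want $f(x'\wedge L')=h(x'\wedge L')$ for all such $x'$, not only for $x'=x$. Points with $L\subseteq E$ satisfy $L'\subseteq E$, and both sides vanish there since $x'\in E$ forces $c_{f\wedge x'}=0$.

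The main obstacle is Case (i): passing from knowledge of $f\wedge x$ for a single $x$ to a global decomposable $h$. I would handle this by exploiting linearity of $x\mapsto c_{f\wedge x}$ on $A_\ell$ together with $E=\ker$. Every $x'\in A_\ell$ has the form $\lambda x+e$ with $e\in E$, so $c_{f\wedge x'}=\lambda c_{f\wedge x}=\lambda c_g$. Hence the whole codeword $c_f$ is determined by $c_g$ through $f(x'\wedge L')=\lambda\, g(L')$. It then suffices to choose $h$ decomposable with $h\wedge x'=\lambda g$ on $\Omega_{\alpha'}$ for all $x'$. Taking $g$ Schubert decomposable, as Ghorpade–Singh allows, we can arrange that $g=h\wedge y$ with $y$ spanning the complement, compatibly with $E$. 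Verifying this compatibility is where the real care is needed.
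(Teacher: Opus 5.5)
\textbf{There is a genuine gap in your Case (i).} You name the right object, $h=\iota_\varphi g$ with $H:=\ker\varphi\supseteq E$ and $\varphi(x)=1$, but you never check that it works. You do not show $c_g=\pm c_{h\wedge x}$ on $\Omega_{\alpha'}(\ell-1,m)$. You also do not show that $h\wedge v$ vanishes there for $v\in E$; your remark that ``both sides vanish'' on $L\subseteq E$ only argues for $f$. Your last paragraph explicitly defers this ``compatibility'' and drifts to a loosely described alternative through Schubert decomposability.

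The missing observation is one line: every $L'\in\Omega_{\alpha'}(\ell-1,m)$ satisfies $L'\subseteq A_{\ell-1}\subseteq E\subseteq H$. From it the argument goes through as follows.
\begin{enumerate}
\item[(a)] Write $g=x\wedge a+b$ with $a\in\bigwedge^{m-\ell}H$ and $b\in\bigwedge^{m-\ell+1}H$. Then $a=\iota_\varphi g=h$, and $h$ is decomposable or zero: if $V_g\not\subseteq H$, choose the factors of $g$ with all but one in $H$.
\item[(b)] For every representative $P'$ of a point of $\Omega_{\alpha'}(\ell-1,m)$ we have $b\wedge P'\in\bigwedge^m H=0$. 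Hence $c_g=c_{x\wedge h}$, and so $h\neq 0$ because $c_g\neq 0$.
\item[(c)] Likewise $h\wedge v\wedge P'\in\bigwedge^m H=0$ for every $v\in E$.
\item[(d)] Take $L\in\Oalm$ with representative $u_\ell\wedge P'$, where $P'=u_{\ell-1}\wedge\cdots\wedge u_1$, $u_i\in A_i$, and $u_\ell=v+\lambda x$ with $v\in E$. Then $f\wedge u_\ell\wedge P'=\lambda\,g\wedge P'=\lambda\,x\wedge h\wedge P'=\pm\,h\wedge u_\ell\wedge P'$. Thus $c_f=\pm c_h$, and the case $\lambda=0$ covers $L\subseteq E$.
\end{enumerate}

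Completed this way, your Case (i) is actually shorter than the paper's. The paper first uses Schubert decomposability of $g$ to find $b+x\in V_g$ with $b$ in an auxiliary space $B_{\ell-1}\subseteq A_{\ell-1}$. It writes $g=g'\wedge b+g'\wedge x$ and proves $c_{g'\wedge b}=0$ by a dimension count on $V_{g'\wedge b}\cap A_{\ell-1}$. Only then does it project the factors of $g'$ into the hyperplane spanned by $\{e_1,\dots,e_m\}\setminus\{x\}$. Because all points of $\Omega_{\alpha'}(\ell-1,m)$ already lie in that hyperplane, this detour and the appeal to Schubert decomposability are unnecessary.

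Your Case (ii) is the paper's Case 2 (Lemma \ref{a}, Lemma \ref{c}, induction, $h=e_{\alpha_\ell}\wedge h_1$), except for a slip in the flag. Since $A_{\ell-1}\subseteq V'$, the spaces $A_i\cap V'$ for $k<i<\ell$ are just $A_i$, not hyperplanes, so your flag does not have type $\alpha''$. What the argument really uses is $\Omega_{\alpha''}(\ell,m-1)=\{L\in\Oalm : L\subseteq A_{\ell-1}\}$. This is realized, for example, by the flag $A_1\subset\cdots\subset A_k\subset H'\subset A_{k+1}\subset\cdots\subset A_{\ell-1}$, with $H'$ a hyperplane of $A_{k+1}$ containing $A_k$.
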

 
\begin{proof}
    We will use induction on $m+\ell$. The result clearly holds for arbitrary $m$ when $\ell=1$ since every nonzero element of $\bigwedge^{m-1}V$ is decomposable. In \cite{ghorpade_singh}, Ghorpade and Singh proved that the result holds affirmatively for arbitrary $m$ when $\ell=2$.
    Now, suppose   $m+\ell > 2$ and the result is true for all values of $m+\ell$ smaller than the given one. Let $c$ be a minimum weight codeword of $C_{\alpha}(\ell,m)$. Fix $f\in \bigwedge^{m-\ell}V$ such that $c=c_f$, and let $E, F$ be same as defined earlier.
    We shall now divide the proof into two cases. 

    \medskip
    \noindent
    \textbf{Case 1.}  $t=1$.
    This case has already been dealt by Ghorpade and Singh \cite[Thm. 6.1]{ghorpade_singh}. But for the sake of completeness of the proof, we borrow their proof. 
    Since $\codim_{A_\ell}E = 1$, by Lemma \ref{a_lminust_in_e}, $A_{\ell -1} \subseteq E$. Thus, we can choose a basis $\{e_1, \dots , e_m\}$ of $V$ such that 
    $$
    A_i = \langle e_1, \dots , e_{\alpha_i}\rangle \; \text{ for } \; i=1, \dots, \ell \quad \text{and} \quad E = 
    \langle e_1, \dots , e_{\alpha_\ell - 1}\rangle.
    $$
    Let $x:= e_{\alpha_\ell}$. Clearly, $x\in F$ and hence $f\wedge x$ corresponds to a minimum weight codeword of $C_{\alpha'}(\ell-1,m)$. By induction, $c_{f\wedge x} = c_g$ for some decomposable $g\in \bigwedge^{m-\ell + 1}V$. Moreover, since $g$ is decomposable and $c_g$ is a minimum weight codeword in $C_{\alpha'}(\ell-1,m)$, we have that $g$ is Schubert decomposable and therefore $\dim V_g \cap A_{\ell-1} = \alpha_{\ell-1} -(\ell-1)$. Thus, we can recursively choose 
    $z_1, \dots , z_{\ell-1}$ such that 
    $$
    z_i \in A_{\ell-1} \setminus \left( \langle z_1, \dots , z_{i-1} \rangle +  V_g \cap A_{\ell-1} \right). 
    $$
    In particular, $z_1, \dots , z_{\ell - 1}$ span an $(\ell-1)$-dimensional subspace, say $B_{\ell-1}$ of $A_{\ell-1}$  such that $A_{\ell-1}= B_{\ell-1} + \left( V_g \cap A_{\ell-1} \right)$. This implies that $V_g \cap B_{\ell-1} = \{0\}$. Also since $\dim V_g = m -\ell +1$ and $x\not\in A_{\ell -1}$, we see that $\dim V_g \cap ( B_{\ell-1} + \langle x \rangle ) \ge 1$. Hence, $V_g$ contains an element of the form $b+x$ for some $b \in B_{\ell -1}$. Consequently, we can find 
    $g_1, \dots, g_{m-\ell}\in V$ such that $g_1, \dots, g_{\alpha_{\ell-1} -(\ell -1)}$ spans  $V_g \cap A_{\ell-1}$, and 
    $$
    g = g_1 \wedge \dots \wedge g_{m-\ell} \wedge (b+x)  = g'\wedge b + g'\wedge x, \quad \text{where } \; 
    g':= g_1 \wedge \dots \wedge g_{m-\ell}.
    $$
    Note that $V_g \cap A_{\ell-1} = \langle g_1, \dots, g_{\alpha_{\ell-1} -(\ell -1)} \rangle \subseteq V_{g'} \cap A_{\ell-1}
    \subseteq V_g \cap A_{\ell-1}$. Thus,  
    $$
    V_{g'} \cap A_{\ell-1} = V_g \cap A_{\ell-1} \quad \text{and} \quad \dim ( V_{g'} \cap A_{\ell-1} ) = \alpha_{\ell-1} - (\ell-1). 
    $$
    \noindent
    \textbf{Claim:} $c_{g'\wedge b } = 0$.\newline
    \noindent 
    The claim clearly is true if $b\in V_{g'}$. Now if $b\not\in V_{g'}$ then 
   \begin{align*}
       V_{g' \wedge b} \cap A_{\ell-1} 
    &=  ( V_{g'} + \langle b \rangle) \cap A_{\ell-1} \\
    &=  ( V_{g'} \cap A_{\ell-1} ) + \langle b \rangle.
   \end{align*}
    Note that as $\dim  (V_{g' \wedge b} \cap A_{\ell-1}) = \alpha_{\ell-1} - (\ell -1)+1 $ therefore $V_{g' \wedge b}$ has a nontrivial intersection with any $(\ell-1)$-dimensional subspace of $A_{\ell-1}$. In particular, we get that $ U \cap V_{g' \wedge b}  \ne \{0\}$ for every $U\in \Omega_{\alpha'}(\ell-1,m)$. This proves the claim.
        
    The claim gives that $c_g = c_{g'\wedge x}$. Now, writing each of $g_1, \dots, g_{m-\ell}$ as a linear combination of $e_1, \dots , e_m$ and noting that $x= e_{\alpha_{\ell}}$, we get
    $g'\wedge x = h \wedge x$, where $h $ is a decomposable element of $\bigwedge^{m-\ell}V$ of the form 
    $h_1 \wedge \dots \wedge h_{m-\ell}$, where each of $h_1, \dots, h_{m-\ell}$ lies in the $(m-1)$-dimensional space $V$ spanned by $\{e_1, \dots , e_m\}\setminus\{x\}$. 
   If we can prove that $c_f = c_h$ then \cite[Thm. 5.5]{ghorpade_singh} guarantees that $h$ is Schubert decomposable and we are done. Thus, we now aim to show that  $c_f = c_h$.

    Let $L \in \Omega_{\alpha}(\ell,m)$ and let $P= u_{\ell} \wedge \dots \wedge u_1$ with $u_i \in A_i$ for $1\le i \le \ell$, be a representative of $L$ in $\bigwedge^{\ell}V$. We want to show that 
    $$
    c_f(P)= c_h(P).
    $$
     Since $c_{f\wedge x} = c_{h \wedge x}$ , we readily see that $f \wedge x \wedge u_{\ell -1}\wedge \dots \wedge u_{1} = h \wedge x \wedge u_{\ell -1}\wedge \dots \wedge u_{1}$. We will now consider two cases. First, suppose $u_{\ell} \in E$. Then $c_{f \wedge u_{\ell} }$ is a zero codeword in $C_{\alpha'}(\ell-1,m)$ and hence $c_f(P)=0$. On the other hand, by our choice of $h$, we see that  $V_h + L$ is a subspace of $V'$. Since $\dim V_h + \dim L = m > \dim V'$, we must have $V_h \cap L \ne \{0\}$ and so we obtain $c_h(L) =0$ as well. Now suppose $u_{\ell} \not\in E$. Then $u_{\ell} = v_{\ell} + \lambda x$ for a unique $v_{\ell} \in E$ and $\lambda \in \Fq$ with $\lambda \ne 0$. As in the previous case,
    $f \wedge v_{\ell} \wedge u_{\ell-1} \wedge \dots \wedge u_{1}  = 0 = h \wedge v_{\ell} \wedge u_{\ell-1} \wedge \dots \wedge u_{1} $. Consequently, 
    $$
    c_h(P) = \lambda \left( h \wedge x \wedge u_{\ell -1}\wedge \dots \wedge u_{1}  \right) 
    = \lambda (f \wedge x \wedge u_{\ell -1}\wedge \dots \wedge u_{1}) 
    = f \wedge u_{\ell} \wedge u_{\ell-1} \wedge \dots \wedge u_{1},
    $$
    and thus $ c_h(P) =  c_{f} (P)$. 

    \medskip
    \noindent
    \textbf{Case 2.}  $t \geq 2$. In this case, Theorem \ref{minimum_weight_codewords} implies that $t = t' \geq2$ and $\alpha_{\ell}-\alpha_{\ell-1}=1$. Consequently, $\dim (E \cap A_{\ell-1})= \dim E-1=\alpha_\ell-t-1$. Thus, we may choose a basis $\{e_1, \ldots, e_m\}$ of $V$ such that
    $$
    A_{\ell-1}=\langle e_1,\ldots,e_{\alpha_{\ell}-1} \rangle, \quad A_{\ell}=\langle e_1,\ldots,e_{\alpha_\ell} \rangle 
    \quad \text{and} \quad
    E=\langle e_1,\ldots,e_{\alpha_\ell-t-1},e_{\alpha_{\ell}}\rangle.
    $$
    By Lemma \ref{a}, we have that $c_{f}=c_{e_{\alpha_\ell} \wedge f_1}$ where $f_1 \in \bigwedge^{m-\ell-1}V'$. Moreover,  Lemma \ref{c} guarantees that $c_{f_1}$ is a minimum weight codeword in $C_{\alpha''}(\ell,m-1)$. By induction there exist a decomposable $h' \in \bigwedge^{m-1-\ell}V'$  such that $c_{f_1}=c_{h'}$. Next, we want to show that the codeword $c_f$ can be indexed by $e_{\alpha_\ell} \wedge h'$ as well, i.e., $c_f=c_{e_{\alpha_\ell} \wedge h'}$.

    It is enough to show that $c_{e_{\alpha_\ell} \wedge f_1}=c_{e_{\alpha_\ell} \wedge h'}$. Let $L \in \Omega_{\alpha}(\ell,m)$ and let $P= u_{\ell} \wedge \dots \wedge u_1$  be a  representative of $L$ in $\bigwedge^{\ell}V$ satisfying $u_i \in A_i$ for $1\le i \le \ell$. If $u_\ell \in A_{\ell-1}$, then $L \in \Omega_{\alpha''}(\ell,m-1)$ and $f_1\wedge u_\ell \wedge \cdots \wedge u_1=h\wedge u_\ell \wedge \cdots \wedge u_1 $.  Hence $c_{e_{\alpha_\ell} \wedge f_1}(P)=c_{e_{\alpha_\ell} \wedge h'}(P)$. On the other hand, if $u_\ell \notin A_{\ell-1}$ then $u_\ell=v_{\ell-1}+\lambda e_{\alpha_\ell}$  for some $v_{\ell-1} \in A_{\ell-1} \text{ and  nonzero } \lambda \in \Fq.$ Now
    $$
    c_{e_{\alpha_\ell} \wedge f_1}(P)=e_{\alpha_\ell} \wedge f_1 \wedge u_\ell \wedge \cdots \wedge u_1=e_{\alpha_\ell} \wedge f_1 \wedge v_{\ell-1} \wedge \cdots \wedge u_1 =e_{\alpha_\ell} \wedge h' \wedge v_{\ell-1} \wedge \cdots \wedge u_1.
    $$
    Set $h=e_{\alpha_\ell} \wedge h'$, then, we get that $c_f=c_h$ where $h \in \bigwedge^{m-\ell}$ is decomposable. This completes the proof.
    \end{proof}	
Combining the above theorem with \cite[Thm. 5.5]{ghorpade_singh}, we establish that the Conjecture \ref{conj:1} holds affirmatively for all $\alpha \in I(\ell,m)$. More precisely, for any $c \in \Calm$
$$
c \text{ has minimum weight} \iff c = c_h \text{ for some Schubert decomposable } h \in \bigwedge^{m-\ell}V.
$$
Ghorpade and Singh in \cite[Thm. 7.5]{ghorpade_singh} enumerate the number of Schubert decomposable elements of $\bigwedge^{m-\ell}V$. In view of the above characterization, this gives us the number of minimum weight codewords of $\Calm$ for all $\alpha \in I(\ell,m)$.

    
\section{Characterization of second minimum weight codewords of Grassmann code}

Computing the weight spectrum of the Grassmann code is a challenging problem in general. The first step in the direction of finding the weight enumerator polynomial of the Grassmann code is to determine the second minimum weight of these codes and enumerate them. Recently, Datta and Dutta \cite{datta2026secondminimumweightgrassmann} computed the  second minimum weight of the Grassmann code $C(\ell,m)$ and showed that for  $2 \leq \ell \leq m-2$, the second minium weight is 
 $$
 q^{\delta-2}(q^2+1),
 $$
where $\delta=\ell(m-\ell)$. Thus, it is natural to find a classification of all such codewords and do their enumeration. In this section, we aim to solve these problems for the Grassmann code $C(\ell, m)$ for any $\ell, $ and any $m$. Note that we may always assume that $\ell\leq m-\ell$. Also, the case $\ell=1$ corresponds to projective Reed-Muller code of order one, i.e., the constant weight code, we should not be worried about these codes. Therefore, for this section we assume that $\ell$ satisfies $2\leq \ell\leq m-\ell$. The proof of the classification of second minimum weight codewords of the Grassmann code $C(\ell, m)$ uses the classification of the minimum weight codewords of the Schubert code that we proved in Theorem \ref{maintheorem}. We begin by fixing some notations and conventions. Let $\theta = (m-\ell-1, m-\ell+2, \dots, m)\in I(\ell, m)$ be  fixed and let 
$$
A_1 \subset \dots \subset A_\ell
$$
be a partial flag corresponding to the dimension sequence $\theta$. Let $C_\theta(\ell, m)$ be the Schubert code corresponding to the Schubert variety $\Otlm$ and let $k_\theta=\dim \Ctlm$. It follows from \cite{Ghorpade_and_Tsfasman} that 
$$
k_\theta =|\{\beta\in I(\ell, m): \beta\leq \theta\}|,
$$
where $\leq$ is the Bruhat order defined earlier. Recall that, if $|\Otlm|=n_\theta$ then the Schubert code $\Ctlm$ is the image of the evaluation map  
\begin{equation}
    \label{eq:EvSchubert}
    \Ev : \bigwedge^{m-\ell}V\longrightarrow \mathbb{F}_{q}^{\,n_{\theta}} \quad \text{defined by} \quad f \mapsto c_f,
\end{equation}
where $c_f$ is the codeword obtained by evaluation $f$ on a fixed, ordered representatives of points of $\Otlm$. Note that the elements $\beta\in I(\ell,m)$ satisfying $\beta \nleq \theta$ are precisely the $\ell+1$ tuples obtained by removing the $i$-th coordinate from the $(\ell+1)$ tuple $(m-\ell, \dots , m)$. Therefore, we have $|I(\ell,m)| - k_\theta = \ell+1$, i.e., $\binom{m}{\ell}-k_\theta=\ell+1$, which is precisely the dimension of the kernel of the evaluation map in equation \eqref{eq:EvSchubert}. Further, let $\{v_1, \dots, v_{m-\ell-1}\}$ be a basis of $A_1$. Extend this to a basis $\{v_1, \dots, v_m\}$ of $V$. Observe that for any $v \in V$, the element $v_1 \wedge \dots \wedge v_{m-\ell-1} \wedge v$ belongs to $\ker(\Ev)$. Moreover, the elements $v_1 \wedge \dots \wedge v_{m-\ell-1} \wedge v_{m-\ell+i}$ for $i \in \{0, \dots, \ell\}$ are linearly independent. Since $\dim(\ker(\Ev))=\ell+1$, they form a basis of $\ker(\Ev)$. Thus, any element of $\ker(\Ev)$ will be of the form $v_1 \wedge \dots \wedge v_{m-\ell-1} \wedge v$ for some $v \in V$. We will use this observation in proving the following theorem.

\begin{theorem} \label{2a}
    Let $f \in \bigwedge^{m-\ell}V$ be such that the corresponding codeword $c_f\in C(\ell, m)$ is a codeword with $\wt(c_f) = q^{\ell(m-\ell)-2}(q^2+1)$. Then there exist linearly independent vectors $v_1, \dots, v_{m-\ell-2} \in V$ and an element $\xi \in \bigwedge^2 V$ of rank 4 such that 
     $$
        f=v_1 \wedge \dots \wedge v_{m-\ell-2} \wedge \xi.
     $$   
\end{theorem}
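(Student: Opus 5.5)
The plan is to find a copy of the Schubert variety $\Otlm$ on which $c_f$ restricts to a minimum weight codeword of $\Ctlm$. Then Theorem \ref{maintheorem}, together with \cite[Thm. 5.5]{ghorpade_singh}, shows that on $\Otlm$ the function $f$ agrees with a Schubert decomposable $h$. The kernel description of $\Ev$ given above then pins down $f = h + v_1\wedge\dots\wedge v_{m-\ell-1}\wedge v$. Finally, an algebraic analysis of this sum yields the form $v_1\wedge\dots\wedge v_{m-\ell-2}\wedge\xi$ with $\rank \xi = 4$.

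\textbf{Step 1: choosing the flag.} For a partial flag of type $\theta$, which amounts to a choice of an $(m-\ell-1)$-dimensional subspace $A_1$ and a hyperplane-type structure, I would count the weight of $c_f$ on $\Otlm$. To do this, I would average the weight of the restriction of $c_f$ over all flags of type $\theta$. Each $L\in W(f)$ lies in the same number of Schubert varieties $\Otlm$, by transitivity of $GL(V)$ on $\Glm$. Hence the average restricted weight is $\wt(c_f)\cdot n_\theta/n$, where $n_\theta = |\Otlm| = n - |\{L:\ \dim(L\cap A_1)=0\}|$. Since $\wt(c_f)=q^{\delta-2}(q^2+1)$ exceeds $d=q^\delta$ only slightly, I expect to find some flag where the restricted weight is at most $q^{\delta(\theta)}$, with $\delta(\theta)=\delta-2$. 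A nonzero restriction then forces equality, i.e.\ the restriction is a minimum weight codeword. I must also rule out vanishing restrictions, or include them in the averaging. The cleanest way is to compare $\wt(c_f)$ with the number of points of $W(f)$ off $\Otlm$, namely those $L$ with $L\cap A_1=0$. The fallback is to choose $A_1\subseteq V_f$-related subspaces directly, or to use a Xiang-type decomposition as in Lemma \ref{weight_inequality}.

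\textbf{Step 2: structure of $f$.} With such a flag, $c_f|_{\Otlm} = c_h$ for a Schubert decomposable $h$. Then $f-h\in\ker(\Ev)$, so $f = h + w\wedge v$ with $w := v_1\wedge\dots\wedge v_{m-\ell-1}$ spanning $\bigwedge^{m-\ell-1}A_1$. Schubert decomposability for $\theta$, where $u=1$ and $p_1=1$, gives $\dim(V_h\cap A_1)=m-\ell-2$. So $h = v_1\wedge\dots\wedge v_{m-\ell-2}\wedge a\wedge b$ after rechoosing the basis of $A_1$ so that $v_1,\dots,v_{m-\ell-2}$ span $V_h\cap A_1$. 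Then
$$f = v_1\wedge\dots\wedge v_{m-\ell-2}\wedge(a\wedge b + v_{m-\ell-1}\wedge v),$$
so $\xi := a\wedge b + v_{m-\ell-1}\wedge v$ has rank $\le 4$.

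\textbf{Step 3: exact rank.} The rank of $\xi$ modulo $\langle v_1,\dots,v_{m-\ell-2}\rangle$ cannot be $0$, since $f\neq 0$. It cannot be $2$ either: then $f$ would be decomposable and $\wt(c_f)=q^\delta$, contradicting Nogin. Hence the rank is exactly $4$, and the vectors can be chosen linearly independent by reducing $\xi$ modulo their span.

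The main obstacle is Step 1, namely guaranteeing a flag where the restriction is nonzero of weight exactly $q^{\delta-2}$. I would first try the direct count: compute the number of $L\in W(f)$ with $L\cap A_1=0$ for a suitable $A_1$. A plausible choice is an $A_1$ containing a large part of $V_f$, since $V_f$ has dimension $m-\ell-2$ when $f$ is of the expected shape. Alternatively, I would pick $A_1$ via a hyperplane argument so that this count is at least $q^{\delta}$, giving the bound $\wt(c_f|_{\Otlm})\le q^{\delta-2}$.
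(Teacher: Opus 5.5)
\textbf{The gap is Step 1.} Your Steps 2 and 3 match the end of the paper's argument: the kernel of $\Ev$ is $\{v_1\wedge\dots\wedge v_{m-\ell-1}\wedge v\}$, Schubert decomposability gives $\dim(V_h\cap A_1)=m-\ell-2$, and the rank of $\xi$ is forced to be $4$. But Step 1 is the whole difficulty, and the mechanism you propose for it does not work.

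\emph{Why averaging fails.} Averaging over flags gives an average restricted weight of
$\wt(c_f)\,n_\theta/n=q^{\delta-2}(q^2+1)\,n_\theta/n$. This is \emph{strictly larger} than $q^{\delta(\theta)}=q^{\delta-2}$ in every case except $(\ell,m)=(2,4)$. The reason is as follows.
\begin{itemize}
\item On one side, $n-n_\theta=q^{\ell(m-\ell-1)}(q^{\ell}+\dots+q+1)=\sum_{i=0}^{\ell}q^{\delta-i}$.
\item On the other side, $q^2n_\theta=\sum_{\beta\le\theta}q^{\delta(\beta)+2}$, and $\delta(\beta)$ takes every value from $0$ to $\delta-2$.
\item Hence $q^2 n_\theta\ge n-n_\theta$ termwise, i.e.\ $(q^2+1)n_\theta\ge n$, with equality only for $(\ell,m)=(2,4)$.
\end{itemize}
For instance, for $\ell=2$, $m=5$ you get $n=(q^4+q^3+q^2+q+1)(q^2+1)$ and $n_\theta=q^4+2q^3+2q^2+q+1$. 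The average is then $q^4\cdot\frac{q^4+2q^3+2q^2+q+1}{q^4+q^3+q^2+q+1}>q^4$.

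An average above the minimum distance of $\Ctlm$ says nothing about whether some flag attains it. Flags with vanishing restriction only push the average over the remaining flags higher.

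\emph{The fallbacks do not close this.} Choosing $A_1$ in terms of $V_f$ presupposes the shape of $f$ you are trying to prove. The ``hyperplane argument'' is never specified.

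\emph{What is missing.} You need a structural input that produces the right $A_1$. The paper gets it from Datta--Dutta and splits into two cases.
\begin{itemize}
\item \emph{Case 1:} $f$ vanishes on $G(\ell,W)$ for some hyperplane $W$. Then $f\in\bigwedge^{m-\ell}W$, so $c_f$ is a second-minimum codeword of $C(\ell-1,m-1)$, and one concludes by induction on $m+\ell$.
\item \emph{Case 2:} no such hyperplane exists. Then by \cite[Thm.~4.10]{datta2026secondminimumweightgrassmann} some hyperplane $W$ carries a restriction of minimum weight $q^{\ell(m-1-\ell)}$. Write $f=e_m\wedge f_1+f_2$ with $f_1\in\bigwedge^{m-\ell-1}W$. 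The restriction of $c_f$ to $G(\ell,W)$ has the same support as $c_{f_1}$, so $f_1=w_1\wedge\dots\wedge w_{m-\ell-1}$ is decomposable by Nogin. For $A_1=\langle w_1,\dots,w_{m-\ell-1}\rangle$, \cite[Prop.~4.6]{datta2026secondminimumweightgrassmann} shows that $c_f$ is a minimum weight codeword of $\Ctlm$.
\end{itemize}
Only after this does Theorem~\ref{maintheorem} enter. From that point your Steps 2 and 3 go through as written.
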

\begin{proof}
   The proof is done by induction on $m+\ell$. The case $m=4 \text{ and } \ell=2$ follows from Nogin \cite[Thm. 3]{nogin_minimum_distance}. Assume now that $m+\ell > 4$ and the theorem holds for all values of $m+\ell$ smaller than the given one.

    \medskip
    \noindent
    \textbf{Case 1:} If there exist a hyperplane  $W$ of $V$ such that $G(\ell,W) \subseteq V(f)$.
    
    Choose a basis $\{e_1, \dots, e_{m-1}\}$ of $W$ and extend it to a basis $\{e_1, \dots, e_m\}$ of $V$. We can write
    $$
    f=e_m \wedge f_1+f_2,
    $$
    where $f_1 \in \bigwedge^{m-\ell-1} W$ and $f_2 \in \bigwedge^{m-\ell} W$. We claim that $f_1=0$. Note that $f_2(L)= 0$ for all $L\in G(\ell, W)$ as $f_2(L)\in \bigwedge^{m} W$ and $\dim W= m-1$. Hence,
    $$
    f(L)= e_m\wedge f_1 (L) \text{ for all } L\in G(\ell, W).
    $$
    But as $G(\ell,W) \subseteq V(f)$, we have $f(L)=0$ for all $L\in G(\ell, W)$. Consequently, $e_m\wedge f_1 (L)=0$ for all $L\in G(\ell, W)$. Now, note that $e_m\in V \setminus W$ and $f_1 \in \bigwedge^{m-\ell-1}W$, therefore the map  $\varphi : \bigwedge^{m-1}W \mapsto \bigwedge^{m}V $ defined as $\varphi(\omega)=e_m \wedge \omega$ is injective. As a consequence, we get 
    $$
     f_1(L)= 0 \text{ for all } L\in G(\ell, W).
    $$
    But this implies that $f_1=0$. This proves the claim.
    
    Thus, we have $f=f_2 \in  \bigwedge^{(m-1)-(\ell-1)} W$. As observed in \cite[Theorem 4.10]{datta2026secondminimumweightgrassmann} the codeword $c_f$ should be a second minimum weight codeword in $C(\ell-1,m-1)$. By induction,  $f_2$ is of the form $f_2=v_1 \wedge \cdots \wedge v_{m-\ell-2} \wedge \xi$, for some linearly independent vectors $v_1, \ldots, v_{m-\ell-2}$ in $W$ and $\rank{\xi}=4$. Since $f = f_2$ and $W \subset V$ we get  
    $$
    f = v_1 \wedge \cdots \wedge v_{m-\ell-2} \wedge \xi.
    $$

    \medskip
    \noindent
    \textbf{Case 2:} If $G(\ell,W) \nsubseteq V(f)$ for all hyperplane $W$ of $V$.

    By Theorem 4.10 of \cite{datta2026secondminimumweightgrassmann}, there exists a hyperplane $W$ of $V$ such that the restriction of $c_f$ to $G(\ell,W)$, regarded as a codeword of $C(\ell,m-1)$, has minimum weight $q^{\ell(m-1-\ell)}$. \\   
    Choose a basis $\{e_1, \dots, e_{m-1}\}$ of $W$ and extend it to a basis $\{e_1, \dots, e_m\}$ of $V$. As before, we can write 
    $$
    f=e_m \wedge f_1+f_2,
    $$
    where $f_1 \in \bigwedge^{m-\ell-1} W$ and $f_2 \in \bigwedge^{m-\ell} W$. As in the previous case, we have 
    $$
    f_2(L)=0 \text{ for all }L \in G(\ell,W)
    $$
    and hence  $f(L)=e_m \wedge f_1 (L)$ for all $L \in G(\ell,W)$. Injectivity of the map $\varphi$ defined in previous case gives
    
    $$
    f(L) = e_m \wedge f_1(L) = 0 \iff f_1(L)=0.
    $$
   Consequently, $\wt (c_f) = \wt (c_{f_1})$ on $G(\ell,W)$ and hence $c_{f_1}$ is a minimum weight codeword in $C(\ell,m-1)$. But the minimum weight codewords of $C(\ell,m-1)$ are given by decomposable elements \cite{nogin_minimum_distance}, so there exist $w_1, \dots, w_{m-\ell-1}\in W$ such that  $f_1 = w_1 \wedge \dots \wedge w_{m-\ell-1}$. Now consider the partial flag 
    $ A_1 \subset \dots \subset A_\ell $ where 
    $$
     A_1 = \langle w_1, \ldots, w_{m-\ell-1} \rangle, \; A_2 = \langle w_1, \ldots, w_{m-\ell+2} \rangle, \; \dots, \; A_\ell = \langle w_1, \ldots, w_{m} \rangle.
    $$
    Note that $\theta=(m-\ell-1, m-\ell+2, \dots, m)$ is the dimension sequence of the above partial flag. By \cite[Proposition 4.6]{datta2026secondminimumweightgrassmann}, $c_f$ is a minimum weight codeword in $\Omega_{\theta}(\ell,m)$. Hence, by Theorem \ref{maintheorem}, there exists a Schubert decomposable element $g \in \bigwedge^{m-\ell}V$ such that $c_f=c_g$. Since $g$ is Schubert decomposable, $\dim (V_g \cap A_1)=m-\ell-2$. Choose a basis $\{v_1, \dots, v_{m-\ell-2}\}$ for $V_g \cap A_1$ and extend it to a basis of $V$ such that 
    $$
    A_1 =\langle v_1, \ldots, v_{m-\ell-1} \rangle, \; A_2 = \langle v_1, \ldots, v_{m-\ell+2} \rangle, \; \dots, \; A_\ell = \langle v_1, \ldots, v_{m} \rangle.
    $$
    Clearly, $g=v_1 \wedge \cdots \wedge v_{m-\ell-2} \wedge \eta $  where $\eta \in \bigwedge^2 V$. Also note that $c_f=c_g$ in $C_{\theta}(\ell,m)$ implies that the codeword $c_{f-g}$ is the zero codeword in $C_{\theta}(\ell,m)$. Hence, $f-g=h$ where $h= v_1 \wedge \dots v_{m-\ell-1} \wedge v \in \ker(\Ev)$ and $v \in V$. Therefore,
    $$
    \begin{aligned}
    f
    & = v_1\wedge\cdots\wedge v_{m-\ell-2}\wedge\eta
    +v_1\wedge\cdots\wedge v_{m-\ell-1}
    \wedge v \\
    & = v_1\wedge\cdots\wedge v_{m-\ell-2}
    \wedge
    \left(\eta+ v_{m-\ell-1} \wedge v \right).
    \end{aligned}
    $$
    Let $ \xi = \eta + v_{m-\ell-1} \wedge v$. We claim that $\xi$ is of rank 4.
    Clearly, $\xi \neq 0$. Moreover, $\xi$ is not decomposable otherwise $f$ would itself be decomposable and in that case $c_f$ is a minimum weight codeword, i.e.,  $\wt(c_f)=q^{\ell(m-\ell)} < q^{\ell(m-\ell)-2}(q^2+1)$. Since the rank of an alternating form is always even and $0 \leq \rank( \xi) \leq 4$, we get that $\rank(\xi)=4$. This gives the desired form of $f$.

\end{proof}

We next prove the converse, thereby completing the characterization of the second-minimum-weight codewords of $C(\ell,m)$. Before that, we prove the following Lemma that will be used in the next result.

\begin{lemma} \label{d}
   Let $f \in \bigwedge^{m-\ell}$ be of the form
   $$
   f = v_1 \wedge \dots \wedge v_{m-\ell-2} \wedge (v_{m-\ell-1} \wedge v_{m-\ell} + v_{m-\ell+1} \wedge v_{m-\ell+2}),
   $$
   where $v_1, \dots, v_{m-\ell+2}$ are linearly independent vectors in $V$. Then  for $x \in V$,
   $$
    f \wedge x \text{ is decomposable } \iff x\in (V_f + \langle v_{m-\ell-1}, \dots, v_{m-\ell+2} \rangle) \setminus V_f.
   $$
\end{lemma}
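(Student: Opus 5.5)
The plan is to reduce everything to the $4$-dimensional space $S:=\langle v_{m-\ell-1},\dots,v_{m-\ell+2}\rangle$ and to test decomposability through annihilator dimensions, using the fact recalled in Section~\ref{sec2}: a nonzero $k$-vector $\omega$ is completely decomposable if and only if $\dim V_\omega = k$. Here, as in the definition of Schubert decomposability, decomposable means nonzero and completely decomposable.

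\textbf{Setup.} Put
$$
D=v_1\wedge\cdots\wedge v_{m-\ell-2},\quad U=\langle v_1,\dots,v_{m-\ell-2}\rangle,\quad \xi=v_{m-\ell-1}\wedge v_{m-\ell}+v_{m-\ell+1}\wedge v_{m-\ell+2}\in\bigwedge^2 S,
$$
so that $f=D\wedge\xi$. Extend $v_1,\dots,v_{m-\ell+2}$ to a basis $v_1,\dots,v_m$ of $V$ and let $Z=\langle v_{m-\ell+3},\dots,v_m\rangle$. Then $V=U\oplus S\oplus Z$.

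\textbf{Step 1: $V_f=U$.} Clearly $U\subseteq V_f$. Conversely, write $x=u+s+z$ with $u\in U$, $s\in S$, $z\in Z$. Then
$$
f\wedge x = D\wedge(\xi\wedge s+\xi\wedge z).
$$
In the basis above, $D\wedge\omega$ with $\omega\in\bigwedge(S\oplus Z)$ vanishes only if $\omega=0$. The component $\xi\wedge z$ lies in $\bigwedge^2S\otimes Z$, so it vanishes only if $z=0$. For $s\in S$, a direct check shows $\xi\wedge s\neq 0$ whenever $s\neq 0$, because $\xi$ has rank $4$ on $S$ (equivalently, $\xi\wedge\xi\neq0$). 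Hence $x\in V_f$ forces $s=z=0$, so $V_f=U$.

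\textbf{Step 2: the direction ($\Leftarrow$).} Suppose $x\in (U+S)\setminus U$, so $x=u+s$ with $s\neq0$. Then $f\wedge x=D\wedge(\xi\wedge s)$, and $\xi\wedge s$ is a nonzero element of $\bigwedge^3 S$. Since $\dim S=4$, every nonzero $3$-vector in $\bigwedge^3S\cong S^*$ is completely decomposable, say $\xi\wedge s=a\wedge b\wedge c$ with $a,b,c\in S$. Then $f\wedge x=D\wedge a\wedge b\wedge c$ is nonzero, since $U\cap S=0$, and completely decomposable.

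\textbf{Step 3: the direction ($\Rightarrow$).} If $x\in U$ then $f\wedge x=0$, which is not decomposable. So assume $x\notin U+S$, i.e. $z\neq0$, and put $w=s+z\notin S$. Then
$$
f\wedge x=D\wedge\omega,\qquad \omega:=\xi\wedge w\in\bigwedge^3(S\oplus Z).
$$
Because $U$ is complementary to $S\oplus Z$, one checks that $V_{f\wedge x}=U\oplus V_\omega$, where $V_\omega$ is computed inside $S\oplus Z$. So it suffices to show $\dim V_\omega<3$. The $3$-vector $\omega$ lives in the $5$-dimensional space $T=S\oplus\langle w\rangle$, so $V_\omega\subseteq T$. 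Take $t=s'+\lambda w\in V_\omega$ with $s'\in S$. Then
$$
0=t\wedge\omega=s'\wedge\xi\wedge w,
$$
since $w\wedge w=0$. As $w\notin S$ and $s'\wedge\xi\in\bigwedge^3S$, this forces $s'\wedge\xi=0$, hence $s'=0$ by the rank-$4$ computation in Step 1. Thus $V_\omega\subseteq\langle w\rangle$, so $\dim V_{f\wedge x}\le (m-\ell-2)+1<m-\ell+1$, and $f\wedge x$ is not decomposable.

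\textbf{Main obstacle.} The step needing the most care is the bookkeeping that $V_{D\wedge\omega}=U\oplus V_\omega$ and that $V_\omega$ sits inside the span $T$ of the "support" of $\omega$. I would handle both by working in the fixed basis and the direct sum $V=U\oplus S\oplus Z$, using that $D\wedge(\cdot)$ is injective on $\bigwedge(S\oplus Z)$.
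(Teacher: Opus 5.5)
Your proof is correct. Step~1 matches the paper's basis-expansion argument for $V_f=U$. Your Step~2 gives a real reason for the direction the paper dismisses as ``easy to see'': every nonzero element of $\bigwedge^3 S$ is decomposable because $\dim S=4$.

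The genuine difference is in the case $x\notin V_f+S$. The paper takes $x$ as a basis vector and writes $f\wedge x=\omega_1+\omega_2$, with $\omega_1=v_1\wedge\cdots\wedge v_{m-\ell}\wedge x$ and $\omega_2=v_1\wedge\cdots\wedge v_{m-\ell-2}\wedge v_{m-\ell+1}\wedge v_{m-\ell+2}\wedge x$. It checks that $\dim(V_{\omega_1}\cap V_{\omega_2})\le m-\ell-1$. It then invokes the external criterion \cite[Lemma 3]{ghorpade_decomposable}: a sum of two decomposable $(m-\ell+1)$-vectors whose annihilators meet in dimension less than $m-\ell$ is not decomposable.

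You instead compute the annihilator of $f\wedge x$ directly. The splitting $V=U\oplus S\oplus Z$ gives $V_{f\wedge x}=U\oplus V_\omega$. The rank-$4$ property ($s'\wedge\xi\neq 0$ for $0\neq s'\in S$) then forces $V_\omega=\langle w\rangle$. Hence $\dim V_{f\wedge x}=m-\ell-1<m-\ell+1$, and the annihilator-dimension criterion recalled in Section~\ref{sec2} finishes the argument.

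Your route is self-contained and uses nothing about $\xi$ beyond $\xi\wedge s\neq 0$ for $s\neq 0$, so it needs no normal form for $\xi$. It also identifies $V_{f\wedge x}$ exactly. The paper's route is shorter once the Ghorpade--Patil--Pillai lemma is granted.

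One detail to state explicitly: the inclusion $V_\omega\subseteq T$ needs $\omega\neq 0$. This follows from Step~1, since $x\notin V_f$ gives $f\wedge x\neq 0$.
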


\begin{proof}
    First, we prove that $V_f=\langle v_1, \dots, v_{m-\ell-2}\rangle$. The inclusion $\langle v_1, \dots, v_{m-\ell-2}\rangle\subseteq V_f$ is obvious. For the converse inclusion, let $x \in V_f$. Extend $v_1, \dots, v_{m-\ell+2}$ to a basis $\{v_1, \dots, v_{m}\}$ of $V$ and write $x =\sum_{i=1}^{m} c_iv_i$. As $x \in V_f$, we have $f \wedge x=0$. Simplifying this, we get 
    $$
    v_1 \wedge \dots \wedge v_{m-\ell-2} \wedge (v_{m-\ell-1} \wedge v_{m-\ell} + v_{m-\ell+1} \wedge v_{m-\ell+2}) \wedge \sum_{i=m-\ell-1}^{m} c_iv_i=0.
    $$
    Expanding the wedge product and using the linear independence of basis vectors, we get $c_i=0$ for all $i \in \{m-\ell-1, \dots, m\}$. In other words, $x \in \langle v_1, \dots, v_{m-\ell-2}\rangle$  and hence $V_f=\langle v_1, \dots, v_{m-\ell-2}\rangle$.

    It is easy to see that if $x \in (V_f+\langle v_{m-\ell-1}, \dots, v_{m-\ell+2}\rangle) \setminus V_f$, then $f \wedge x$ is completely decomposable. We already saw that  $f \wedge x =0$ if and only if $x \in V_f$. We will show that $f \wedge x$ is non-decomposable for $x\in V\setminus (V_f+\langle v_{m-\ell-1}, \dots, v_{m-\ell+2}\rangle)$. For that assume $x \in V$ such that $x \notin (V_f + \text{span}\{v_{m-\ell-1}, \dots, v_{m-\ell+2} \})$. Then $v_1, \dots, v_{m-\ell+2},x$ are linearly independent. Set $v_{m-\ell+3}=x$ and extend these vectors to a basis $\{v_1, \dots, v_{m-\ell+2}, v_{m-\ell+3}, \dots, v_m\}$ of $V$. Define $\omega_1= v_1 \wedge \dots \wedge v_{m-\ell-2} \wedge v_{m-\ell-1} \wedge v_{m-\ell} \wedge x$ and $ \omega_2= v_1 \wedge \dots \wedge v_{m-\ell-2} \wedge v_{m-\ell+1} \wedge v_{m-\ell+2} \wedge x$. Then $f \wedge x = \omega_1+ \omega_2$. We claim that $\dim(V_{\omega_1} \cap V_{\omega_2}) \leq m-\ell-1$.  Let $y \in V_{\omega_1} \cap V_{\omega_2}$. Then $y$ can be written as
    $$
    y=\sum_{i=1}^{m-\ell} \alpha_i v_i +\alpha_{m-\ell+3}v_{m-\ell+3} = \sum_{i=1}^{m-\ell-2} \beta_i v_i + \sum_{i=m-\ell+1}^{m-\ell+3} \beta_i v_i.
    $$
    By linear independence of the basis vectors and comparison of the above two
    expressions, we get that
    $$
    \alpha_i=\beta_i \quad \text{for } i \in \{1, \dots, m-\ell-2, m-\ell+3\} \quad \text{ and} \quad \alpha_{m-\ell-1}=\alpha_{m-\ell}=0.
    $$
     Thus, $y \in \langle v_1, \dots, v_{m-\ell-2},v_{m-\ell+3}\rangle$ and $\dim(V_{\omega_1} \cap V_{\omega_2}) \leq m-\ell-1$. By \cite[Lemma 3]{ghorpade_decomposable}, we conclude that $ \omega_1 +\omega_2 = f \wedge x$ is not decomposable.
\end{proof}

\begin{theorem} \label{2b}
    Let $f \in \bigwedge^{m-\ell}$ be such that $f=v_1 \wedge \dots \wedge v_{m-\ell-2} \wedge \xi$ where $v_1, \dots, v_{m-\ell-2}$ are linearly independent vectors in $V$ and $\xi \in \bigwedge^2V$ is an alternating form of rank 4. Then 
    $$
    \wt(c_f)= q^{\delta-2}(q^2+1).
    $$
    In other words, $c_f$ is a second minimum weight codeword of $C(\ell,m)$.
\end{theorem}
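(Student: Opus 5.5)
The plan is to compute $\wt(c_f)$ directly. We reduce to a quotient of dimension $\ell+2$ and then invoke Nogin's weight spectrum of $C(2,n)$ \cite[Thm. 3]{nogin_minimum_distance}.

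\textbf{Normal form.} Put $k:=m-\ell-2$ and $U:=\langle v_1,\dots,v_k\rangle$. Since $\xi$ has rank $4$, we can write $\xi=a\wedge b+c\wedge d$. Modifying $a,b,c,d$ by elements of $U$ changes $f$ by nothing, so we may assume $v_1,\dots,v_k,a,b,c,d$ are linearly independent; this is the form of Lemma \ref{d}.

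\textbf{Reduction to $\bar V:=V/U$.} Let $\pi:V\to\bar V$ be the projection, so $\dim\bar V=\ell+2$. Fix $L\in G_{\ell,m}$ with representative $P=u_1\wedge\cdots\wedge u_\ell$.
\begin{itemize}
\item If $L\cap U\neq\{0\}$, then $f\wedge P=0$.
\item If $L\cap U=\{0\}$, then $\bar L:=\pi(L)$ is $\ell$-dimensional. Since $v_1\wedge\cdots\wedge v_k$ kills every component of the $u_i$ lying in $U$, we get
$$
f\wedge P=v_1\wedge\cdots\wedge v_k\wedge\tilde\xi\wedge\tilde P .
$$
Here $\tilde\xi,\tilde P$ are lifts of $\bar\xi:=\pi(\xi)$ and $\bar P:=\pi(u_1)\wedge\cdots\wedge\pi(u_\ell)$ to a complement of $U$. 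This is nonzero if and only if $\bar\xi\wedge\bar P\neq 0$ in $\bigwedge^{\ell+2}\bar V$.
\end{itemize}
So $L$ is in the support of $c_f$ exactly when $L\cap U=\{0\}$ and $\bar L$ is in the support of the codeword $c_{\bar\xi}\in C(\ell,\ell+2)$ attached to $\bar\xi\in\bigwedge^2\bar V$.

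\textbf{Fibre count.} For a fixed $\ell$-dimensional $\bar L\subseteq\bar V$, the subspaces $L$ with $L\cap U=\{0\}$ and $\pi(L)=\bar L$ are complements of $U$ inside $\pi^{-1}(\bar L)$, which has dimension $k+\ell$. There are $q^{k\ell}$ of them. Hence
$$\wt(c_f)=q^{k\ell}\,\wt_{C(\ell,\ell+2)}(c_{\bar\xi}).$$

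\textbf{The quotient weight.} It remains to show $\wt(c_{\bar\xi})=q^{2\ell-2}(q^2+1)$. I expect this to be the main obstacle, as it is the only step where the rank $4$ genuinely enters.
\begin{enumerate}
\item First try the duality route. Identifying $\bar L$ with its annihilator $\bar L^{\perp}\in G(2,\bar V^{*})$ gives a bijection $G(\ell,\bar V)\to G(2,\bar V^*)$. Under the isomorphism $\bigwedge^2\bar V\cong(\bigwedge^2\bar V^*)^*$ coming from $\bigwedge^{\ell+2}\bar V\cong\Fq$, the vanishing of $\bar\xi\wedge\bar P$ corresponds to the vanishing of $\bar\xi$, viewed as an alternating form on $\bar V^*$, on $\bar L^\perp$. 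This form still has rank $4$, since $\bar\xi=\bar a\wedge\bar b+\bar c\wedge\bar d$ with $\bar a,\bar b,\bar c,\bar d$ independent. So $c_{\bar\xi}$ becomes a codeword of $C(2,\ell+2)$ given by a rank-$4$ alternating form. By Nogin \cite[Thm. 3]{nogin_minimum_distance}, such a codeword has weight $q^{2(\ell+2-2)-2}(q^2+1)=q^{2\ell-2}(q^2+1)$.
\item If the normalisation conventions in Nogin's theorem prove awkward to match, count directly instead. Choose a basis $\bar a,\bar b,\bar c,\bar d,e_5,\dots,e_{\ell+2}$ of $\bar V$. Count the $2$-planes of $\bar V^*$ that are non-isotropic for the form $a^*\wedge b^*+c^*\wedge d^*$, whose radical has dimension $\ell-2$. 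Split according to the dimension of the image of the plane in the $4$-dimensional quotient by the radical, and sum.
\end{enumerate}

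\textbf{Conclusion.} Multiplying, $\wt(c_f)=q^{k\ell+2\ell-2}(q^2+1)=q^{\ell(m-\ell)-2}(q^2+1)=q^{\delta-2}(q^2+1)$. By the result of Datta and Dutta \cite{datta2026secondminimumweightgrassmann} quoted at the start of this section, this is the second minimum weight, so $c_f$ is a second minimum weight codeword.
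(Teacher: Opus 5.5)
Your argument is correct, modulo the caveat in the second paragraph, and it follows a genuinely different route from the paper.

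The paper argues by induction on $m+\ell$. It double-counts $Z=\{(L',x)\in G_{\ell-1,m}\times V : (f\wedge x)(L')\neq 0\}$, whose fibres over $W(f)$ all have size $q^{\ell-1}(q^\ell-1)$. It evaluates $\sum_{x}\wt(c_{f\wedge x})$ using Lemma~\ref{d}, which itself rests on the decomposability criterion of \cite{ghorpade_decomposable}. That lemma sorts the $x$ into those with $f\wedge x$ decomposable, which contribute the minimum weight of $C(\ell-1,m)$, and those with $f\wedge x$ again of the same shape, which contribute the second minimum weight by induction.

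You instead split off $U$ entirely. The support of $c_f$ is the set of $\ell$-spaces meeting $U$ trivially whose image in $V/U$ lies in the support of $c_{\bar\xi}\in C(\ell,\ell+2)$. Each image has exactly $q^{k\ell}$ such lifts, and Grassmann duality turns $c_{\bar\xi}$ into the codeword of $C(2,\ell+2)$ of a rank-$4$ alternating form.

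Your route is non-inductive and needs neither Lemma~\ref{d} nor the external decomposability lemma. It also explains the formula as $q^{k\ell}$ times a weight in $C(2,\ell+2)$. The same computation gives $\wt(c_f)=q^{\delta+2-2r}(q^{2r}-1)/(q^2-1)$ whenever $\bar\xi$ has rank $2r$. The paper's route stays inside the $(L',x)$ double-counting framework of Section~2 and produces Lemma~\ref{d}, which it reuses for the enumeration.

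For your last step, a direct count is simpler than splitting by the image modulo the radical. The ordered pairs $(x,y)$ in $\bar V^{*}$ with $\bar\xi(x,y)\neq 0$ number $(q^{\ell+2}-q^{\ell-2})(q^{\ell+2}-q^{\ell+1})$. Dividing by $(q^2-1)(q^2-q)$ gives $q^{2\ell-2}(q^2+1)$.

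The caveat applies equally to the paper's ``Clearly, there exist linearly independent vectors $v_1,\dots,v_{m-\ell+2}$''. Your normalization does not follow from the hypotheses as literally stated. If $\langle a,b,c,d\rangle\cap U\neq\{0\}$, no modification of $a,b,c,d$ by elements of $U$ makes $v_1,\dots,v_k,a,b,c,d$ independent. In that case the conclusion genuinely fails: for example, $v_1\wedge(v_1\wedge b+c\wedge d)=v_1\wedge c\wedge d$ is decomposable, although $v_1\wedge b+c\wedge d$ has rank $4$. What both proofs actually use is that $\bar\xi$ has rank $4$ in $\bigwedge^2(V/U)$, equivalently $\langle a,b,c,d\rangle\cap U=\{0\}$. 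You should state this as a hypothesis rather than claim it can be arranged. With that reading, your proof is complete.
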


\begin{proof}
    Clearly, there exists linearly independent vectors $ \{v_1, \dots, v_{m-\ell+2}\}$ such that 
    $$
    f = v_1 \wedge \dots \wedge v_{m-\ell-2} \wedge (v_{m-\ell-1} \wedge v_{m-\ell} + v_{m-\ell+1} \wedge v_{m-\ell+2}).
    $$
    We prove the theorem by induction on $m+\ell$. The theorem clearly holds because of Nogin \cite{nogin_minimum_distance}, for any  $m=4$ and $\ell=2$. Assume now that $m+\ell > 4$ and the result holds for
    all values of $m+\ell$ smaller than the given one.
    
    \noindent
    Consider the set
    $$
    Z:=\{(L',x)\in G_{\ell-1,m}\times V : (f\wedge x)(L')\neq 0\}.
    $$
    Define a map
    $$
    \phi:Z\longrightarrow W(f),\qquad (L',x)\longmapsto \langle L',x\rangle.
    $$
    By the definition of $Z$,  $\phi$ is
    a well-defined,  surjective map. To compute the weight of $c_f$, we will count the cardinality of $Z$ in two different ways. 
     For a fixed $x\in V$, the number of
    $L'\in G_{\ell-1,m}$ for which $(f\wedge x)(L')\neq0$ is $\wt(c_{f\wedge x})$. Thus
    $$
    |Z|=\sum_{x \in V}\wt(c_{f \wedge x}).
    $$
    If $x \in V_f$, then $f \wedge x=0$, so these elements do not contribute. By Lemma \ref{d}, the form $f \wedge x$ is decomposable if and only if $x\in (V_f + \langle v_{m-\ell-1}, \dots, v_{m-\ell+2} \rangle) \setminus V_f$ and the number of such $x$ is
    $q^{m-\ell+2}-q^{m-\ell-2}$. For the remaining $q^m-q^{m-\ell+2}$ choices of $x$, $f \wedge x$ is non-decomposable. Note that $f \wedge x$ is a codeword in $C(\ell-1,m)$. If $f \wedge x$ is decomposable, then $c_f$ is a minimum weight codeword of $C(\ell-1,m)$, and therefore $\wt(c_{f \wedge x})=q^{(\ell-1)(m-\ell+1)}$. Moreover, if $f \wedge x $ is non decomposable, then by induction hypothesis, $\wt(c_{f \wedge x})=q^{(\ell-1)(m-\ell+1)}+q^{(\ell-1)(m-\ell+1)-2}$. Consequently,
  
    \begin{align*}
        |Z| &=  (q^{m-\ell+2}-q^{m-\ell-2})\,q^{(\ell-1)(m-\ell+1)}\\
     &+(q^m-q^{m-\ell+2})(q^{(\ell-1)(m-\ell+1)}+q^{(\ell-1)(m-\ell+1)-2}) \\  
        = & q^{\ell-1}(q^\ell-1)\,q^{\ell(m-\ell)-2}(q^2+1).
    \end{align*}
  
    On the other hand, for each $L \in W(f), \phi^{-1}(L)$ consists of the pairs $(L',x)$ such that $L'$ is a hyperplane in $L$ and $x \in L \setminus L'$. Thus 
    $$|Z|=\sum_{L \in W(f)}\phi^{-1}(L)=\wt(c_f)\,q^{\ell-1}(q^\ell-1).$$
    Hence, we get that $\wt(c_f)=q^{\ell(m-\ell)-2}(q^2+1)$.
\end{proof}

Using the classification of the codewords of second minimum weight in the Grassmann code $C(\ell,m)$, we can now determine the number of codewords that attain the second minimum weight.

\begin{theorem}
    Let $2 \leq \ell \leq m-2$. The number of codewords of $C(\ell,m)$ having weight $q^{\ell(m-\ell)-2}(q^2+1)$ is
    $$
    A_{q^{\ell(m-\ell)-2}(q^2+1)} = \begin{bmatrix}
    m\\
    \ell +2
    \end{bmatrix}_q
    \begin{bmatrix}
    \ell+2\\
    4
\end{bmatrix}_q
    q^2(q-1)(q^3-1).
    $$
\end{theorem}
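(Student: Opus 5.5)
By Theorems \ref{2a} and \ref{2b}, a codeword $c_f\in C(\ell,m)$ has weight $q^{\ell(m-\ell)-2}(q^2+1)$ exactly when $f=v_1\wedge\dots\wedge v_{m-\ell-2}\wedge\xi$ with $v_1,\dots,v_{m-\ell-2}$ linearly independent and $\rank\xi=4$. The evaluation map \eqref{eq:defGC} is injective, so the number we want is the number of such $f\in\bmlV$. I will count these $f$ by attaching to each one two canonical subspaces and then counting the fibres.

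\emph{Step 1: canonical subspaces.} By the first part of the proof of Lemma \ref{d}, the annihilator is $V_f=\langle v_1,\dots,v_{m-\ell-2}\rangle$, of dimension $m-\ell-2$. Writing $\xi=v_{m-\ell-1}\wedge v_{m-\ell}+v_{m-\ell+1}\wedge v_{m-\ell+2}$, set $U_f:=\langle v_1,\dots,v_{m-\ell+2}\rangle$, of dimension $m-\ell+2$. I will show that $U_f$ depends only on $f$: it is the smallest subspace $U$ of $V$ with $f\in\bigwedge^{m-\ell}U$. To prove this, I use a basis adapted to any candidate $U$ and observe that the coefficients of $f$ in the standard basis of $\bigwedge^{m-\ell}V$ involve all of $v_1,\dots,v_{m-\ell+2}$. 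So $f\mapsto(V_f,U_f)$ is a well-defined map to pairs of nested subspaces of dimensions $m-\ell-2\subset m-\ell+2$.

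\emph{Step 2: counting the pairs.} The number of choices of $V_f$ is $\left[\begin{smallmatrix} m\\ m-\ell-2\end{smallmatrix}\right]_q=\left[\begin{smallmatrix} m\\ \ell+2\end{smallmatrix}\right]_q$. For fixed $V_f$, the choices of $U_f$ correspond to $4$-dimensional subspaces of the $(\ell+2)$-dimensional space $V/V_f$, and there are $\left[\begin{smallmatrix} \ell+2\\ 4\end{smallmatrix}\right]_q$ of these.

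\emph{Step 3: counting the fibres.} Fix the pair $(V_f,U_f)$, a generator $\omega$ of $\bigwedge^{m-\ell-2}V_f$, and a complement $T$ of $V_f$ in $U_f$ with $\dim T=4$. The map
$$\bigwedge^2 T\to\bigwedge^{m-\ell}U_f,\qquad \eta\mapsto\omega\wedge\eta,$$
is injective. Every $f$ in the fibre has the form $\omega\wedge\eta$ with $\eta\in\bigwedge^2T$ of rank $4$, because any component of $\xi$ lying in $V_f\wedge U_f$ is killed by $\omega$. Conversely, every such $\omega\wedge\eta$ has $V_f$ and $U_f$ as its two canonical subspaces, since $\eta$ has rank $4$ and therefore its support is all of $T$. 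Rescaling $\omega$ is absorbed because $\eta$ ranges over all rank-$4$ elements. So the fibre size equals the number of rank-$4$ elements of $\bigwedge^2\Fq^4$:
$$q^6-1-(q-1)\left[\begin{smallmatrix}4\\2\end{smallmatrix}\right]_q=q^6-1-(q-1)(q^2+1)(q^2+q+1)=q^2(q-1)(q^3-1).$$
Multiplying the counts from Steps 2 and 3 gives the formula in the statement.

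The step I expect to require the most care is Step 1, in particular the claim that $U_f$ is intrinsic and that no $f$ is counted twice, for instance through a different splitting of $\xi$ modulo $V_f$. I will handle this through the minimal-support characterization of $U_f$ together with the formula for $V_f$ from Lemma \ref{d}; the remaining steps are routine $q$-binomial counting.
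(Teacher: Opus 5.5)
Your proof is correct. It shares the paper's skeleton: injectivity of $\Ev$ together with Theorems \ref{2a} and \ref{2b} reduces the problem to counting the elements $v_1\wedge\dots\wedge v_{m-\ell-2}\wedge\xi$. Lemma \ref{d} pins down $V_f$, which gives the factor $\left[\begin{smallmatrix} m\\ \ell+2\end{smallmatrix}\right]_q$, and the scalar ambiguity of $v_1\wedge\dots\wedge v_{m-\ell-2}$ is absorbed into $\xi$.

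Where you differ is the fibre over a fixed $V_f$. The paper takes an $(\ell+2)$-dimensional complement $W$ and quotes the classical number $N(\ell+2,4)$ of rank-$4$ alternating forms on $W$. You instead stratify once more by the support $U_f$, which gives the factor $\left[\begin{smallmatrix} \ell+2\\ 4\end{smallmatrix}\right]_q$. You then count rank-$4$ elements of $\bigwedge^2\Fq^4$ by hand as $q^6-1-(q-1)\left[\begin{smallmatrix} 4\\ 2\end{smallmatrix}\right]_q$.

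What each route buys:
\begin{itemize}
\item You effectively prove $N(\ell+2,4)=\left[\begin{smallmatrix} \ell+2\\ 4\end{smallmatrix}\right]_q q^2(q-1)(q^3-1)$ instead of citing it.
\item You spell out the bijection between the fibre and the rank-$4$ forms (injectivity of $\eta\mapsto\omega\wedge\eta$ and the converse), which the paper leaves implicit.
\item The paper's route is shorter and needs no second canonical subspace, but it depends on the external enumeration of skew-symmetric forms by rank.
\end{itemize}

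Two steps in your write-up are stated loosely, though both are easy to make rigorous.
\begin{itemize}
\item Step 1: let $U_0=\langle v_1,\dots,v_{m-\ell+2}\rangle$. For every nonzero functional $\lambda$ on $U_0$, the contraction of $f$ by $\lambda$ is nonzero, because the monomials in the $v_i$ it produces are pairwise distinct. Hence $f$ lies in $\bigwedge^{m-\ell}$ of no proper subspace of $U_0$. Minimal supports are unique since $\bigwedge^{k}U_1\cap\bigwedge^{k}U_2=\bigwedge^{k}(U_1\cap U_2)$.
\item Step 3: you assert that $\eta$ has rank $4$ without saying why. The one-line reason is that the images of $v_{m-\ell-1},\dots,v_{m-\ell+2}$ under projection along $V_f$ form a basis of $T$.
\end{itemize}
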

\begin{proof}
    Note that the evaluation map
    $$
    \Ev : \bigwedge^{m-\ell} V \longmapsto \Fq^{\,n} \quad \text{given by} \quad f \mapsto c_f
    $$
    is injective. In view of Theorem \ref{2a} and Theorem \ref{2b}, it is sufficient to count the number of $f \in \bigwedge^{m-\ell}V$ such that $f=v_1 \wedge \dots \wedge v_{m-\ell-2} \wedge \xi$ where $v_1, \dots, v_{m-\ell-2}$ are linearly independent and $\xi$ is an alternating form of rank 4. By Lemma \ref{d}, $V_f = \text{span}\{v_1, \dots, v_{m-\ell-2}\}$. Moreover, if $\{w_1, \dots, w_{m-\ell-2}\}$ is any other basis of $V_{f}$, then 
    $$
    v_1 \wedge \dots \wedge v_{m-\ell-2}=\lambda w_1 \wedge \dots w_{m-\ell-2} 
    \quad \text{where} \quad 
    \lambda \in \Fq ^{\times}.
    $$
    Thus, for a fixed subspace $U = V_f$, the change of basis of $U$ changes the exterior product $v_1 \wedge \dots \wedge v_{m-\ell-2}$ only up to a nonzero scalar.
    The number of $(m-\ell-2)$-dimensional subspace of $V$ is given by the Gaussian binomial $\begin{bmatrix} m \\ m-\ell-2 \end{bmatrix}_q = \begin{bmatrix} m \\ \ell +2 \end{bmatrix}_q$. Fix such a subspace U and choose an $(\ell+2)$ dimensional complement $W$ of $U$ in $V$, so that $V = U \oplus W$. We now need to count the rank four elements of $ \bigwedge ^2 W$, which is exactly equal to the number $N(\ell+2,4)$ of all bilinear skew-symmetric forms of rank 4. Therefore, the total number is
    $$
    \begin{bmatrix} m \\ \ell +2 \end{bmatrix}_q N(\ell+2,4) = \begin{bmatrix}
    m\\
    \ell +2
    \end{bmatrix}_q
    \begin{bmatrix}
    \ell+2\\
    4
\end{bmatrix}_q
    q^2(q-1)(q^3-1).
    $$
\end{proof}
\section*{Acknowledgment } The second author of this article is supported by the Indo-Norwegian Cooperation Programme in Higher Education and Research (INCP2) grant. He would like to thank the University Grant Commission of India for providing the financial support. He would also like to thank the Department of Mathematics and Statistics, UiT, Norway, for their warm hospitality during his visit to Tromso, where the work of this article started.

\bibliographystyle{plain}  
\bibliography{bib}   

\end{document}